\newtheorem{theorem}{Theorem}
\newtheorem{corollary}[theorem]{Corollary}
\theoremstyle{definition}
\newtheorem{definition}[theorem]{Definition}
\newtheorem{assumption}[theorem]{Assumption}
\newtheorem{example}[theorem]{Example}
\title{The Power of Spreadsheet Computations}
\author{JERZY TYSZKIEWICZ\footnote{Research funded by Polish National Science Centre (Narodowe Centrum
Nauki)}\\Institute of Informatics, University
of Warsaw,\\ Banacha 2, 02-097 Warszawa, Poland\\\texttt{jty@mimuw.edu.pl}}
\begin{document}

\begin{titlepage}

\maketitle

\begin{abstract} We investigate the expressive power of
spreadsheets. We consider spreadsheets which contain only formulas,
and assume that they are small templates, which can be filled to a
larger area of the grid to process input data of variable
size. Therefore we can compare them to well-known machine models of
computation. We consider a number of classes of spreadsheets defined
by restrictions on their reference structure. Two of the classes
correspond closely to parallel complexity classes: we prove a direct
correspondence between the dimensions of the spreadsheet and amount of
hardware and time used by a parallel computer to compute the same
function. As a tool, we produce spreadsheets which are universal in
these classes, i.e. can emulate any other spreadsheet from them. In
other cases we implement in the spreadsheets in question instances of
a polynomial-time complete problem, which indicates that the the
spreadsheets are unlikely to have efficient parallel evaluation
algorithms. Thus we get a picture how the computational power of
spreadsheets depends on their dimensions and structure of references.
\end{abstract}

\end{titlepage}

\section{Introduction}\label{S1}

\subsection{Why spreadsheets?}

Spreadsheets are an extremely popular type of software systems. They
have conquered very diverse areas of present day politics, business,
research, and, last but not least, our private lives. However, this
prevalence is not so evident, because spreadsheets are typically used
in the back office and are not presented to the public. They make to
the news only when something went really wrong: for instance a
spreadsheet used to justify a widely implemented public policy, as in
the case of the extremely influential report \cite{Debt} concerning a
supposed causal relationship between high national debt and low
economic growth, turns out to contain an error in a formula, affecting
the outcome of the calculations \cite{NoDebt}. Or when an
\textit{Excel\/} model was used to manage the investments of JPMorgan
Chase \& Co. bank, which led to trade losses estimated in billions of
dollars \cite{JPMorgan}. Research is not an exception, and a careful
reader of \textit{Science\/} magazine can read \cite{science1,science2},
in which a scientific controversy finally turns out to be related to a
spreadsheet mistake.

Indeed, spreadsheets are among the most frequently used software tools
of any kind. Despite that, and surprisingly enough, very little has
been known about their computational power. Consequently, the users
are not really aware of the abilities and limitations of the tool they
use.

\subsection{How to measure expressiveness of spreadsheets}

The aim of this paper is to analyze the power of spreadsheets
considered as a tool for specifying general-purpose computations. 

It is clear that they belong to the nonuniform computation models,
where for each input size there is a separate computing
device. Uniformity can be introduced to such a model by imposing that
there is a common, low complexity procedure to create those devices,
given the input size. In this respect, spreadsheets come with a
natural, built-in solution of this issue: \textit{filling},
i.e. automated copying of formulas into neighboring cells, with
suitable reference adjustments. In spreadsheet programming practice,
filling is the usual way to produce a spreadsheet processing a large
amount of data from a few formulas prepared manually, or to adjust an
already existing one to accommodate a new supply of data.

We follow this idea and treat a small spreadsheet consisting of a few
initial cells with formulas as a program, and filling as a method to
run it on input data of variable size. The input data is also located
in the spreadsheet, but is not subject to filling.  For mathematical
convenience, we imagine that the spreadsheet grid is actually infinite
and any number of rows and columns can be filled with copies of the
initial cells. These two assumptions allow us to apply the methods of
computational complexity, which are asymptotic in nature, to the study
of spreadsheets.

No macros and user-defined functions written in a general programming
language are permitted. A few specific and infrequently occurring
spreadsheet functions are also excluded from our analysis.

\subsection{Main results}

The structural properties of spreadsheets which determine their
computational properties are defined by restrictions on the pattern of
references of formulas to other cells and do not apply to the
references to input data:

\begin{enumerate}
\item A spreadsheet is row-organized if all ranges to which its
  formulas refer are horizontal ones, i.e., rows.

\item A spreadsheet is row-directed if every formula refers only to
  cells and ranges located strictly above the location of the formula
  itself.
\end{enumerate}

These properties are preserved by filling. Our findings described
below refer also to the analogous column-organized and column-directed
spreadsheets. Thus we get several possible combinations of
column-related and row-related properties.

We describe the computational power of spreadsheets by relating them to
Parallel Random Access Machines (PRAM for short).

We encounter both Concurrent Read Concurrent Write (CRCW) priority
write machines and Exclusive Read Exclusive Write (EREW) ones.

On several occasions we proceed by implementing instances of the
$P$-complete \textit{Circuit Value Problem\/} (abbreviated CVP) in
spreadsheets, in order to demonstrate that they are unlikely to have
efficient parallel evaluation algorithms.

Our first main result is that any given initial row-organized
row-directed spreadsheet can be converted into a program $\pi$ for an
EREW PRAM such that the function computed by that spreadsheet filled
to the dimensions of $c$ columns and $r$ rows is always the same as
that computed by $\pi$ evaluated on a PRAM with $c$ processors, $O(c)$
cells of memory and running for $O(r\log c)$ time. Thus, if a
spreadsheet is row-organized row-directed, it can be efficiently
parallelized: its number of columns contributes only a logarithmic
factor to the total computation time. This sets an upper bound on the
computational power of row-organized row-directed spreadsheets. An
analogous result holds for column-organized column-directed
spreadsheets.

In order to get a lower bound, and thus determine the class of
functions computable by those spreadsheets, we prove our second main
result: there is a row-organized row-directed spreadsheet with 19
formulas which is a universal CRCW PRAM evaluator, i.e., one which
given a (suitably encoded) program $\pi$ together with its input, and
filled to the dimensions of $p$ columns and $10t$ rows, computes in
its last row the description of PRAM after executing $t$ steps of
$\pi$ on $p$ processors and with $p$ cells of shared memory. This
demonstrates that spreadsheets can implement a natural and broad class
of general-purpose computations.

This spreadsheet is also a universal row-organized row-directed
spreadsheet: any other spreadsheet from this class can be equivalently
expressed as a program for a PRAM, which in turn can be executed on
that spreadsheet. This above results demonstrate that row-organized
row-directed spreadsheets and PRAMs are almost equivalent in computing
power, with clear relations between the resources in both
models. Indeed, translating a spreadsheet into an equivalent program
for PRAM, and then back to spreadsheet, incurs only a logarithmic
overhead, a common effect of translations between different parallel
computation models.

\begin{figure}
\includegraphics[width=5in]{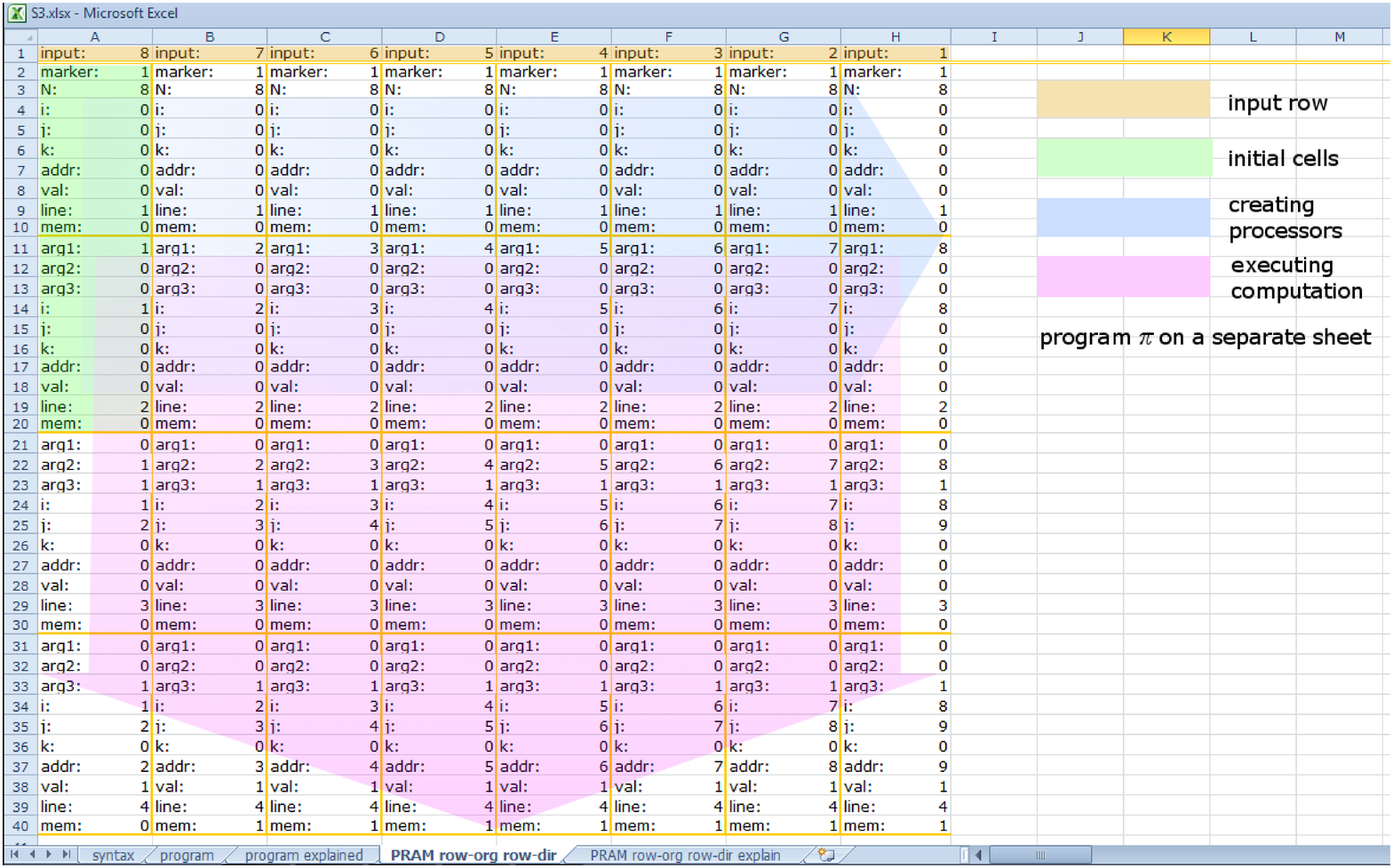}
\caption{PRAM evaluator in a row-organized row-directed spreadsheet
  S1: structure and mode of operation. Processors are located in
  columns, and computation time advances downwards. A vertical group
  of 10 cells constitutes a snapshot of a processor at a given time,
  so that extending computation time by one unit requires filling 10
  rows. It is provided as an electronic appendix S1.}
\label{F1}
\end{figure}

At the same time PRAMs and spreadsheets are extremely different: the
former have only programming primitives and no data analysis ones,
while the latter have only data analysis functions and do not support
any form of programming on the level of the spreadsheet itself.

Our result indicates that it is possible and feasible to turn
row-organized row-directed spreadsheets into a potential programming
language for specifying general-purpose parallel computations, with
clear relation between the dimensions of the spreadsheet and
complexity of its formulas, and the computational resources necessary
to execute it. They might be particularly dedicated for users who have
high computing needs but no programming training.

Let us move to other classes of spreadsheets, not necessarily
row-oriented ones. We demonstrate a row-organized but not row-oriented
PRAM simulator (electronic appendix S2), more powerful than the one previously
described. If it is extended to $c$ columns and $r$ rows, computes the
description of PRAM after executing $ cr/10p$ steps of $\pi$ on $p$
processors and with $p$ cells of shared memory, where $ p\le c$ is a
part of the input. Thus this PRAM simulator can perform a parallel or
a sequential computation, depending on the actual input, trading the
number of processors and cells of memory for more computation
time. 

To get the results for other classes of spreadsheets, we implement
instances of CVP in them. Each time we do so, we get hypothetical
lower bounds on the parallel complexity of evaluating spreadsheets in
this class. The larger instance of this problem can be implemented,
the higher is the lower bound.

\begin{table}
\label{Tab1}}%
{\begin{tabular}{|p{1in}|p{2in}|p{2in}|} 
\hline 
& Column directed & Column un-directed \\ 
\hline
Row directed &
\begin{itemize}
\item Upper bounds $O(r\log cr)$ with $c$ processors
and $ O(c\log cr)$ with $r$ processors on EREW PRAM
\item  No known PRAM
simulation
\item $2n$ columns and $ 4n$ rows implement CVP
instance of size $n$ 
\end{itemize}
& 
\begin{itemize}
\item Upper bound $ O(r\log cr)$ with
$c$ processors on EREW PRAM
\item $c$ columns and $r$ rows simulate any CRCW PRAM with
$c$ processors and cells of memory for $ r/10$ steps 
\item 1 column and $ n$ rows implement CVP instance of size $n$ 
\end{itemize}
\\ \hline
Row un-directed & 
\begin{itemize}
\item Upper bound $ O(c\log cr)$ time on EREW PRAM with
$r$ processors
\item $c$ columns and $r$ rows simulate any PRAM with
$r$ processors and cells of memory for $ c/10$ steps
\item $ n$ columns
and 1 row implement any CVP instance of size $ n$ 
\end{itemize}
& 
\begin{itemize}
\item Upper bound
$ O(c^{2} r^{2} \log cr)$ with 1 processor
\item $c$ columns and $r$ rows simulate any CRCW PRAM with $p$
  processors and cells of memory for $ cr/10p$ steps, $p$ is a part of
  the input
\item $c$ columns and $r$ rows implement any CVP instance of size $
  cr/8$
\end{itemize}
\\ \hline
\end{tabular}
\caption{Summary of demonstrated upper and lower bounds for
spreadsheet computations, depending on their structure. We disregard
small changes depending on whether spreadsheets are row- or
column-organized or un-organized, which are discussed in the main
text.}
\end{table}

Three highlights from Table \ref{Tab1} are the following:

\begin{enumerate}
\item Row-oriented but not row-organized spreadsheets have parallel
  evaluation algorithms with $c$ processors and of time complexity
  $O(r\log cr)$, similar to those for row-organized row-oriented ones
  discussed above, except that they need much more memory: $O(cr)$
  instead of $O(c)$ cells.

\item For spreadsheets which are row-oriented but not column-oriented,
  and its dual class, one of the dimensions contributes a logarithmic
  factor to the computation time, while a CVP instance can be encoded
  in the other dimension, which causes its size to appear as a linear
  factor in the computation time.

\item For spreadsheets which are simultaneously row-oriented and
  column-oriented, one has choice which of the dimensions will
  contribute a logarithmic factor and which a linear one to the
  computation time. However, it is unlikely that there is an algorithm
  in which both dimensions contribute only logarithmic factors. This
  is demonstrated by electronic appendix S3, which implements CVP ``diagonally''
  in the spreadsheet. There is a more complicated, column-organized
  version of this implementation, too.
\end{enumerate}

All the above results taken together give a comprehensive picture of
the computing power of spreadsheets without macros and without a few
forbidden functions. It turns out that this power is strongly
influenced by the pattern of references within the spreadsheet, in
addition to its size. Moreover, two dual classes of spreadsheets
appear to be interesting candidates for a general purpose parallel
programming language, aimed at users without expertise in programming.

\subsection{Plan of the paper}

In the following three Sections \ref{S2}--\ref{S4} we introduce
\textit{Excel\/} spreadsheets, parallel random access machines and the
necessary notions of complexity theory.

Next in Section \ref{S5} we discuss spreadsheets of unrestricted
structure, demonstrating that they can compute $P$-complete problems.

Then in the following two Sections \ref{S6} and \ref{S7} we consider
row-directed and row-organized spreadsheets, demonstrating efficient
parallel algorithms to evaluate them.

Then in the next two Sections \ref{S8} and \ref{S9} we show that a
natural class of parallel computations is expressible in row-directed
and row-organized spreadsheets. We achieve that by proving a theorem
about simulation of CRCW PRAMs by row-organized row-oriented
restricted spreadsheets.

Then in Section\ref{S10} we advocate the use of such spreadsheets as a
parallel programming language.

In the following Section \ref{S11} we simulate PRAMs with
input-dependent number of processors by row-organized but not
row-oriented restricted spreadsheets. We also indicate why in this
case it is much harder to get a precise description of the computing
power of spreadsheets in terms of machine models computation.

In Section \ref{S12} we explain why several functionalities of
spreadsheets, including some specific functions and circular
computations, are excluded from our analysis.

In the last Section \ref{S13} we briefly indicate related research.

\section{Spreadsheets}\label{S2}

We assume the reader to be familiar with spreadsheets and their
use. Syntactically the present paper is based on \textit{Microsoft
  Excel\/}$^{TM}$ and the reference to syntax and meaning of formulas
is the on-line help of \textit{Microsoft Excel\/}
\cite{Excel-help}. The demonstration software accompanying this paper
is also prepared for \textit{Excel\/}. Our spreadsheets are provided
in the .xlsx format, but they are compatible with the earlier .xls
format and can be saved in it without any loss of functionality.

In this paper we do not use macros or user defined functions written
in Visual Basic or any other external programming language. So our
spreadsheets consist of formulas constructed from the basic, built-in
spreadsheet functions and nothing else.

In the spreadsheets provided with the paper we frequently use
names. This is a method to assign a name to a frequently used range of
cells, and later on use that name in formulas to denote that range. We
use it for the sole purpose of making formulas shorter and easier to
understand. This method does not increase the computational abilities
of \textit{Excel\/}.

One of the main features of spreadsheets we will use in this paper is
the fill operation, which is performed by selecting a rectangular
range of cells, clicking a small handle in the lower right corner of
it and extending its boundaries either horizontally or vertically,
which results in copying the content of the initial range to the new,
larger area of the worksheet. An example is shown in Figure 2.

\begin{figure}
\includegraphics[width=5in]{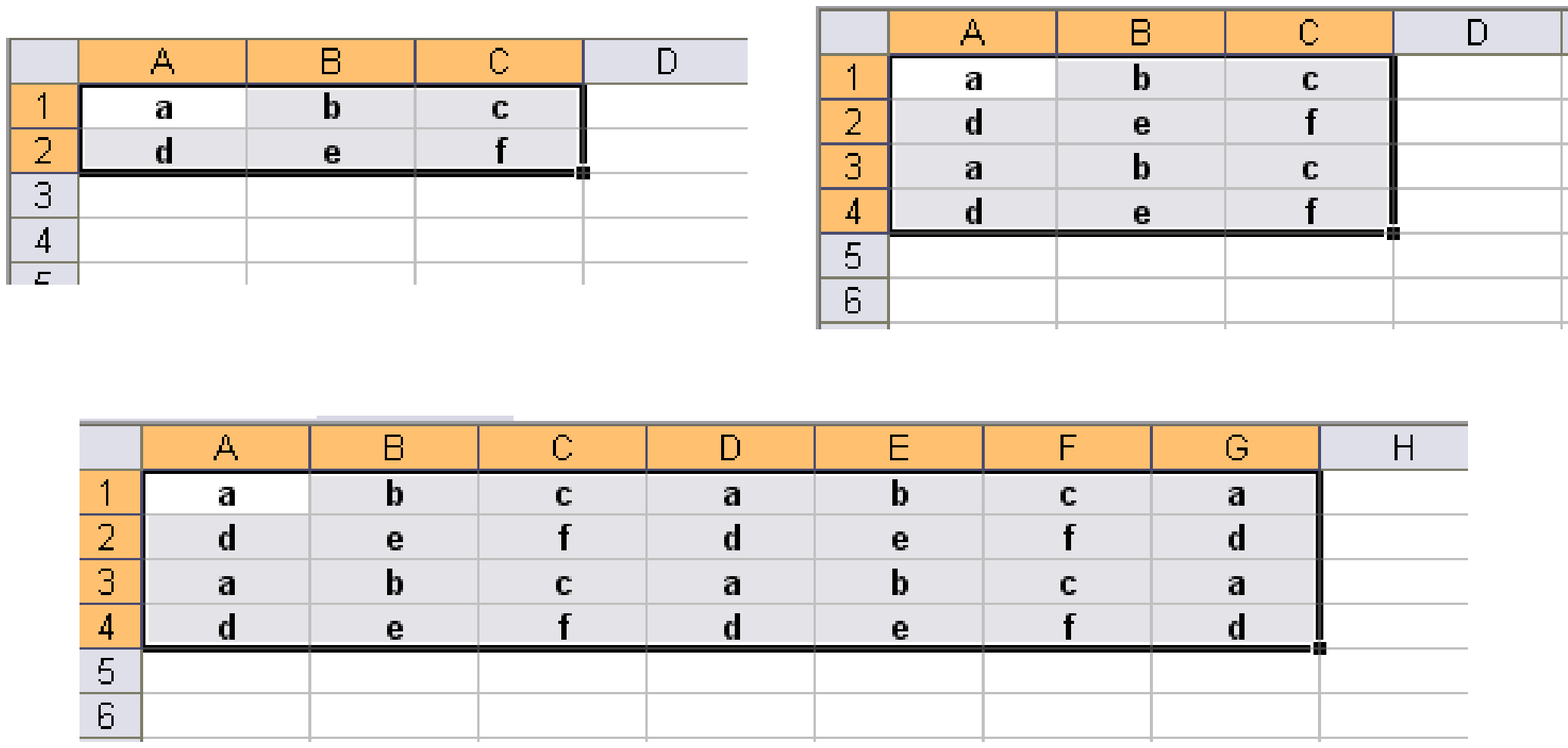}
\caption{The result of filling an initial range
twice: first by filling two additional rows, and then four columns.}
\label{F2}
\end{figure}

\begin{assumption}\label{A1} We consider the spreadsheet grid to be
infinite. Therefore, the user can start with a small spreadsheet and
fill any desired number $r$ of rows and $c$ of columns with copies of
the initial formulas. We assume also that there is no limit on the
numbers used in the spreadsheet, so that we can represent the location
of every cell in the spreadsheet by a number.
\end{assumption}

\begin{definition}\label{D1} The \textit{computing part of a
spreadsheet\/} $S$ is the fragment of it, used for filling the grid
  with formulas, in order to process the input data of variable
  size. The \textit{input part\/} of this spreadsheet are those cells,
  which are not subject to filling and do not contain formulas
  referring to the cells filled. The \textit{output part\/\/} of a
  spreadsheet is the last created row/column of cells.

We always assume that these three parts are indicated in the
spreadsheets we discuss below, although spreadsheets used in everyday
practice do not make this distinction clear.
\end{definition}

\begin{definition}\label{D2}\textit{Computation of a spreadsheet\/} is
  the act of marking the computing part of this spreadsheet and
  filling a certain number of rows and columns (in either order) by
  formulas from this part, to which the spreadsheet system
  (\textit{Excel\/} in this paper) responds by evaluating the so
  created formulas. In particular, this process produces the
  \textit{output\/}, which consists of the values computed in the
  output part.
\end{definition}

Under Assumption \ref{A1} and Definitions \ref{D1} and \ref{D2} we can
consider a spreadsheet as a program, with clearly identified areas for
input, executable part and output. The process of filling allows one
to use this program for processing input data of variable size. This
is a crucial step, which allows us to describe the computing power of
spreadsheets using common complexity classes.

Now we are going to define two crucial structural
characteristics of spreadsheets. It will turn out that they determine
their ability to express computations of certain complexity. Both
characteristics are preserved by filling.

\begin{definition}\label{D3} A spreadsheet
$S$ is \textit{row-organized,\/} if all ranges to which formulas of $S$
refer are horizontal ones, i.e. rows.

Dually, $S$ is \textit{column-organized\/}, if all ranges to
which formulas of $S$ refer are vertical ones, i.e. columns.

Finally, $S$ is \textit{un-organized\/} if it neither
row-organized nor column-organized.

In all cases, references to the input part of $S$ are exempt
from those limitations.
\end{definition}

\begin{definition}\label{D4}
A spreadsheet $S$ is \textit{row-directed\/} if every formula refers
only to cells and ranges located in rows above that one in which the
formula is located.

Dually, $S$ is \textit{column-directed}, if every formula
refers only to cells and ranges located in columns located to the left
of that one in which that formula is located.

$S$ is \textit{un-directed\/} if it is neither row- nor
column-directed.

Finally, $S$ is \textit{bi-directed\/} if it is both row- and
column-directed.

Again, references to the input part of $S$ are exempt from
those limitations.
\end{definition}

\section{PRAM model}\label{S3}

A PRAM machine $A$ consists of the following components:

\begin{enumerate}
\item Unbounded number of cells of global read-and-write shared
memory, each one equipped with:

\begin{enumerate}
\item Its own serial number, unique, counted from 1 on.

\item Capacity to store one integer, initially set to 0.
\end{enumerate}

\item Unbounded number of cells of global read-only input memory, each
one equipped with:

\begin{enumerate}
\item Its own serial number, unique, counted from 1 on.

\item Capacity to store one integer, initially set to 0.
\end{enumerate}

\item Unbounded number of processors, each one equipped with:

\begin{enumerate}
\item Its own serial number $ s,$ unique, counted from 1 on, and a
constant $ N,$ equal to the total number $p$ of active processors and
a constant $ M$ equal to the number of active cells of shared memory.

\item Three private registers $ i,j,k$ for storing integers, initially set
to 0. 

\item Its own instruction counter, initially set to 1.
\end{enumerate}

\item Program $\pi$, which is a list of consecutively numbered
instructions of the following forms, and where $ x$ is ranging over
$ i,j,k$ and $ u,v$ are ranging over
$ i,j,k,s,N,M,M[i],M[j],M[k],M[s],I[i],I[j],I[k],I[s]$ and integer
constants and $\ell $ ranges over integer constants:

\begin{enumerate}
\item $ x:=u$ 

\item $ x:=u\circ v$, where $\circ $ is among $\{ +,-,*,/\} $ 

\item $ M[x]:=u$ 

\item $ M[x]:=u\circ v$, where $\circ $ is among $\{ +,-,*,/\} $ 

\item \textbf{if} $ u<v$ \textbf{then goto} $\ell $ $ $ 
\end{enumerate}
\end{enumerate}

The number of private registers can be chosen
arbitrarily. We needed some fixed limit, so we have decided to use 3
of them.

$ M[x]$ stands for the shared memory cell whose address is
$ x,$ and $ I[x]$ stands for the input memory cell whose address is $ x.$ 

The input sequence $ v$ of integers is located in the input
memory cells, and $ I[1]$ contains the length of the input sequence,
including itself (so that the empty input sequence is passed to the
machine as 1 in $ I[1]$).

Then a fixed number of its processors (say $p$) is
initialized, and a fixed number of shared memory cells (say $ m$) is
initialized. Typically $p$ and $ m$ are adjusted to the length of $ v.$ 

During the computation each of the processors follows this
program. The state of $A$ at each time $ t$ of its computation on input
$ v$ is defined to be the sequence of $p$ 4-tuples of integers and a
sequence of $ m$ integers: the $ n$-th tuple is the state of the $ n$-th
processor, consisting of: the values of its local variables $ i,j$ and
$ k,$ the value of its instruction counter, while the sequence of
integers represents the content of the shared memory.

Initially (i.e., at time $ t=0$) the values of local
registers are 0, and the instruction counter 1, and the shared memory
values are 0.

A single step of computation of $A$ corresponds to a
parallel, simultaneous change of all $p$ 4-tuples and $ m$ integers
describing the processors and shared memory of $ A.$ 

The effect of a reference to the local registers, in order
to read or write, should be clear, if we say that $ /$ is the integer
division. References to the shared memory and input memory cells are
more complicated. First of all, an attempt to read from a nonexistent
input or shared memory cell (i.e., of address higher than $ I[1]$ or
than $p$, resp.) or to read from cells of numbers smaller than 1 is an
error and the result of this operation is unpredictable: it may cause
the machine to break and stop operating, or to retrieve some
accidental value and continue computation.

An attempt to write to a shared memory cell of zero or
negative number is permitted, but has no effect, and similarly if that
number is higher than $p$. If more than one processor attempts to read
from the same shared or input memory cell, all of them succeed and get
the same value. If more than one processor attempts to write to the
same shared memory cell, all the requests are executed, and the new
value of that memory cell is the one written by the processor with the
lowest serial number; the values written by the remaining processors
get lost. Reading is performed before writing, so the processors which
read from a shared memory cell to which other processors wish to
write, get the ``old'' value. The model of PRAM with this policy of
resolving read and write conflicts is called priority write
\textbf{C}oncurrent \textbf{R}ead \textbf{C}oncurrent \textbf{W}rite
PRAM, abbreviated CRCW PRAM.

Of course, the instruction counter is increased by one in
such cases. The conditional instruction \textbf{if} $ u<v$ \textbf{then
goto} $\ell $ causes the instruction counter of a processor to be
updated depending on the comparison of the values of $ u$ and $ v$ : if
$ u$ is less than $ v$ then it is updated to $\ell $, otherwise it is
incremented by 1. We always assume that $\ell $ is a positive integer
not exceeding the total number of lines in $\pi$-- note that the
value of $\ell $ is hard-coded into the program. If the value of the
instruction counter becomes higher than the number of lines in the
program, the processors halts.

Thus, given a PRAM $A$ as above and its input vector $ v$,
the computation of $ M$ on $ v$ is represented by a finite or infinite
sequence of states of $A$, which may but need not be constant from
some moment on.

The result of computation of $A$ after $ n$ steps is the
content of the shared memory after completing that step.

Another, substantially weaker model of PRAM is EREW, short
for \textbf{E}xclusive \textbf{R}ead \textbf{E}xclusive
\textbf{W}rite, which results from CRCW by forbidding concurrent reads
and writes altogether. Its advantage is that it is closer to real-life
parallel computers than CRCW.

The programming language of our PRAM machines is extremely
simple, but, as it is well-known, equivalent in computing power to
even very rich ones, so indeed each processor separately has a
universal computing power, equivalent to that of a Turing machine.

PRAM is a machine which can easily implement referential data
structures, such as lists, trees, etc., as well as arrays. Therefore
we use them without any further explanation.

\section{Complexity theory}\label{S4}

$p$ is the class of all functions computable by PRAMs with
one processor and unbounded amount of memory, in time polynomial in
the length of the input. In fact, the number of memory cells ever used
in a computation cannot exceed the number of computation steps made.

In this paper we use the $P$-complete problem
\textit{Circuit Value Problem\/} (abbreviated CVP).

An instance of CVP is a sequence of $ n$ Boolean
substitutions (the reason for starting numbering from 2 is purely
technical and explained below):
\[p_{2} :=conn_{2} (inputs_{2})\] 
\[p_{3} :=conn_{3} (inputs_{3})\] 
\[\ldots \] 
\[p_{n} :=conn_{n} (inputs_{n})\] 
The connectives $conn_{i}$ can be \textbf{and}, \textbf{or} and
\textbf{not}. The first two of them have always two inputs,
\textbf{not} has one input. Each of the inputs in $inputs_{k}$ can be
either $true$, or $false$, or a variable $p_{i}$ with $i<k,$
indicating that the value of that variable should be used. For
example, the following is a valid instance of CVP of size $ n=4$ :
\begin{eqnarray*}
p_{2} &:=and(true,false)\\ 
p_{3} &:=or(true,p_{2})\\ 
p_{4} &:=or(false,p_{3})\\ 
p_{5} &:=not(p_{4})
\end{eqnarray*}

One can then calculate the values of all variables in that
instance, proceeding top-down: $p_{2}$ is $ false$, $p_{3}$ is
$true$, $p_{4}$ is $true$ and $p_{5}$ is $false.$ 

The CVP problem is that, given an instance of CVP, to decide if the
last variable is $true$. That problem is known to be $P$-complete.

We encode CVP in various spreadsheets in order to demonstrate that
they are unlikely to be efficiently parallelizable. For convenience,
when we do so we use 0 in place of $false$, 1 in place of $true$,
for variables we use their numbers as names (we have started numbering
from 2, so this does not lead to confusing truth values with
variables), and we drop all conventional symbols like $:=$,
parentheses and commas, so that the above example is encoded by the
spreadsheet shown in Figure \ref{F3}.

\begin{figure}
\includegraphics[width=1.5in]{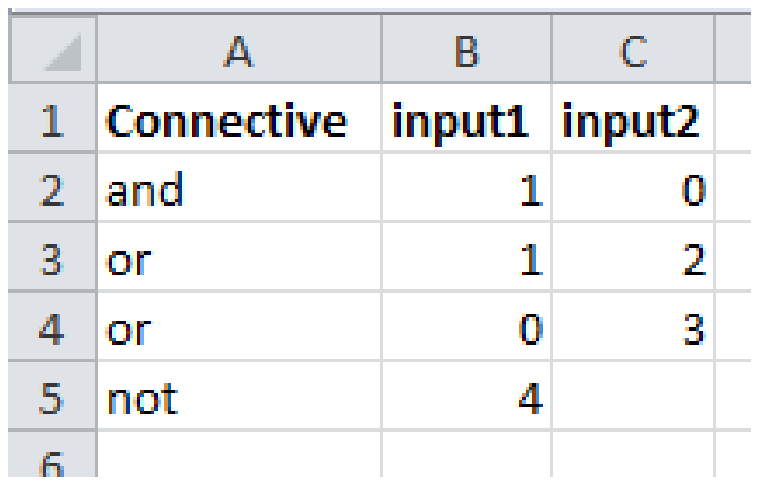}
\caption{CVP instance in a spreadsheet.}
\label{F3}
\end{figure}

$NC$ is the class of functions computable by PRAMs with
polynomially many processors, using polynomially many shared memory
cells and in time bounded by $\log ^{k} n$ for some constant $ k$, for
input data of size $ n$.

We often use the ``big Oh'' notation for the asymptotic growth of
functions. In this notation, $P$ is the class of functions computable
by a sequential (i.e., single-processor) program in time $n^{O(1)}$,
while $NC$ is the class of functions computable in time $\log^{O(1)}
n$ on a parallel computer using $n^{{O(1)}} $ processors.

It is a well-known fact that $ NC\subseteq P,$ and it is
widely believed, but remains unproven, that indeed $ NC\ne P.$ 

This hypothesis is often rephrased as a statement, that there are
problems decidable in polynomial time, which are inherently
sequential, i.e., cannot be solved very fast by parallel machines. $
NC$ is understood in this statement as ``the'' class of problems which
can be solved fast in parallel.

If $ NC\ne P$, then in particular $P$-complete problems cannot be
solved by PRAMs in time $\log^{O(1)} n$.

Therefore CVP cannot be solved on a PRAM with polynomially
many processors and polynomially many cells of shared memory in time
$O(\log^{O(1)} n)$, where $n$ is the size of input
and $ k$ is a constant, unless $P=NC.$ 

In this paper we do not derive hypothetical lower bounds on
the complexity of evaluating various classes of spreadsheets. Instead,
we estimate the size of CVP instances, which are computable in the
spreadsheets in question. Each time it is easy to translate the size
of CVP instances we produce into the potential lower bounds.

\section{Un-directed spreadsheets I: Complexity}\label{S5}

It is obvious, that if the cells of a small initial spreadsheet $S$
are filled to create $c$ columns and $r$ rows of formulas, then the
resulting spreadsheet can be computed in time polynomial in $cr$,
given the initial $S$, the dimensions $c$,$r$ and the input data of
$S$. More precisely, spreadsheet operators typically can be
implemented by algorithms of time complexity linear in the size of
their input ranges, with a few exceptions like \texttt{FREQUENCY}, which
require complexity of order $O(n\log n)$. The largest possible range
in a spreadsheet with $cr$ cells does not exceed $cr$ cells, so
computing cells can be done in time $c^{2} r^{2}\log cr$. Determining
the evaluation order requires topological sorting, which can be done
in time linear in the number of inter-cell references, which does not
exceed $ c^{2} r^{2} $. Therefore in total we get the maximal possible
computation time $O(c^{2}r^{2} \log cr)$.

In case of row-organized spreadsheets or column-organized
spreadsheets we get the orders of $ cr^{2} \log cr$ and $ c^{2} r\log
cr$, respectively, due to restricted sizes of ranges.

The following theorem is neither surprising nor difficult to
demonstrate. It states in particular, that CVP is computable by
spreadsheets.

\begin{theorem}\label{T1}
There exists an un-directed, un-organized spreadsheet $S_{4}$, such
that when it is extended to dimensions of either $n$ rows and $6$
columns or $6$ rows and $n$ columns, it computes the solution to the
CVP problem of size $n$, given its description as input.
\end{theorem}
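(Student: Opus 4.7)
The plan is to exhibit an explicit template. A size-$n$ CVP instance has $n-1$ gates, each encoded (as in Figure \ref{F3}) by three adjacent input cells: an operator name and two input slots holding either the literal $0/1$ or a reference index $\ge 2$ to an earlier gate. Next to each gate encoding I would place three computing cells: two that resolve each input to its truth value, and one that combines them. The resolution formula will have the shape \texttt{IF(x<2,x,INDEX(R,x-1,6))}, where $R$ is a two-dimensional range covering the already-produced block of output cells: if $x$ is a literal $0/1$ it is returned as-is, and otherwise \texttt{INDEX} fetches the output previously computed for gate $p_x$. The combination formula is a nested \texttt{IF} on the operator string that applies \texttt{AND}, \texttt{OR}, or \texttt{NOT} to the two resolved inputs.

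First I would describe a row-oriented variant: place gate $p_k$ in row $k-1$, with its encoding in columns A--C and its computation in columns D--F, taking $R$ to be a block ending at column F so that \texttt{INDEX(R,x-1,6)} retrieves the output cell of $p_x$ in column F. Correctness will follow by an immediate induction on $k$: by the time row $k-1$ is evaluated, the outputs of $p_2,\ldots,p_{k-1}$ are already present in column F of rows $1,\ldots,k-2$, each \texttt{INDEX} call therefore returns the correct value, and the combination formula then yields $p_k$. Filling to $n-1$ rows by $6$ columns thus computes the whole instance and places $p_n$ in the bottom-right output cell. By transposition I would obtain a dual column-oriented variant that handles the filling to $6$ rows by $n-1$ columns.

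I would then take $S_4$ to be the union of the two sub-templates, with their input parts located in disjoint regions so that the active filling direction exercises only one of them while the other remains inert. Using a two-dimensional lookup range in each variant makes $S_4$ neither row-organized nor column-organized, hence un-organized per Definition \ref{D3}; and because the \texttt{INDEX} may retrieve a cell lying in the same row or same column as the formula itself, $S_4$ is also neither row- nor column-directed, hence un-directed per Definition \ref{D4}. The main obstacle is purely technical: setting the absolute/relative reference syntax so that filling preserves the intended meaning of every formula, and verifying that the two sub-templates do not interact. The mathematical content reduces to the straightforward induction above, which is why this theorem is described as neither surprising nor difficult.
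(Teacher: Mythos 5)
Your construction is essentially the paper's: the paper's proof simply points to the electronic appendix S4, and the surrounding text confirms that $S_4$ consists of exactly the two functionally separate fragments you describe (one exercised by vertical filling, one by horizontal filling), each evaluating the CVP gates in order via lookup of previously computed outputs, with un-organization and un-direction arising from combining the two dual fragments. Only minor bookkeeping differs (a size-$n$ instance in the paper has $n$ gates $p_2,\dots,p_{n+1}$, not $n-1$, and the un-directedness of the union really comes from one fragment being only column-directed and the other only row-directed rather than from the \texttt{INDEX} range touching the formula's own row), but these do not affect the argument.
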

\begin{proof} A fully commented spreadsheet is provided as electronic
  appendix S4. 
\end{proof}

The same behavior is demonstrated by a slightly more complicated
spreadsheet $S_{5}$, which extended to yield $4r$ rows and $2c$
columns, computes the answer to a CVP instance of size $cr.$ So its
full area is devoted to solving the instance of CVP. This spreadsheet
is provided as electronic appendix S5.

The difference between $S_{4}$ and $S_{5}$ is that the former consists
of two functionally separate fragments, which can be independently
converted into row-oriented and column-oriented structure, as it is
demonstrated in electronic appendix S4. In $S_{5}$ it is impossible,
because it uses two-dimensional ranges in a crucial way.

\section{Directed spreadsheets I: Algorithm for a restricted function set}\label{S6}

At first, we restrict spreadsheets significantly, in order
to formulate and prove our results in a simplified scenario.

First, we restrict spreadsheets to use only integer numbers.

Next, we restrict spreadsheets to just a few functions, here
given in \textit{Excel\/} syntax.

All of them are also found (perhaps in a slightly different
syntactical form) in the competing spreadsheet systems, which
therefore are also covered by the theorems we shall prove below.

\subsection{Functions in restricted spreadsheets}

\subsubsection{Arithmetical functions}

We use standard arithmetical functions: $ +$, $\cdot $, $-$ 
and $ /$.

\subsubsection{Comparison functions}

\texttt{value1=value2}

\texttt{value1<value2}

\texttt{value1>value2}

\texttt{value1<=value2}

\texttt{value1>=value2}

\texttt{value1<>value2}

\texttt{value1} and \texttt{value2} can be numbers, formulas or cell
references to numbers. The result is \texttt{TRUE} if the arguments
are in the specified relation, and \texttt{FALSE} otherwise.

\subsubsection{Logical functions}

\texttt{IF(test,value1,value2)}

If \texttt{test} is, refers or evaluates to \texttt{TRUE}, the
function returns \texttt{value1}, if \texttt{test} is, refers or
evaluates to \texttt{FALSE} it returns \texttt{value2}. In all other
cases the result is a \texttt{\#VALUE!} error.

\texttt{IFERROR(value1,value2)}

This function returns \texttt{value2} if \texttt{value1} is, refers to or
evaluates to any error value, and \texttt{value1} otherwise.

\texttt{CHOOSE(index-num,value1,value2,\dots)}

\texttt{CHOOSE} is a kind of generalization of \texttt{IF}, because in one
formula it allows the choice among up to 29 possible values to be
returned. \texttt{index-num} specifies which value argument is
selected. \texttt{index-num} must be a number between 1 and 29, or a formula or
reference to a cell containing a number between 1 and 29.

If \texttt{index-num} is $i$, \texttt{CHOOSE} returns \texttt{value$i$}.

\texttt{AND(value1,value2,\dots)} and \texttt{OR(value1,value2,\dots)}

compute the logical conjunction and disjunction of their
arguments, respectively.

\subsubsection{Address functions}

We use two address functions: \texttt{ROW()} and \texttt{COLUMN()}, which
return the number of row (column, resp.) od the cell in which they are
located. In case they are given an argument, a reference to a single
cell, they return the row (column, resp.) in which that reference is
located.

\subsubsection{Aggregating functions}

\texttt{MATCH(lookup-value,lookup-array,match-type)}

\texttt{lookup-value} is the value to be found in a range specified
by \texttt{lookup-array}. \texttt{lookup-value} can be a number, a formula or a cell
reference to a number.

\texttt{match-type} in the spreadsheets we create is typically 0. It
specifies how the spreadsheet matches \texttt{lookup-value} with values in
\texttt{lookup-array}: \texttt{MATCH} finds the first value that is exactly equal to
\texttt{lookup-value} and returns its relative position in the \texttt{lookup-array}.

If \texttt{match-type} is 1, then values in \texttt{lookup-array} are assumed
to be sorted into an increasing order, and \texttt{MATCH} finds the first value
that is larger or equal to \texttt{lookup-value} and returns its relative
position in the \texttt{lookup-array}. If match-type is -1, then values in
\texttt{lookup-array} are assumed to be sorted into a decreasing order, and
\texttt{MATCH} finds the first value that is smaller or equal to \texttt{lookup-value}
and returns its relative position in the \texttt{lookup-array}.

In all three cases, if no value is found which satisfies the
criteria, the result is an error \texttt{\#N/A!}. In case of \texttt{match-type} equal
$\pm 1$ \texttt{lookup-value} is larger (smaller, resp.) than all values in the
sorted \texttt{lookup-array}, the result is the number of the last non-empty
cell in \texttt{lookup-array}.

\texttt{INDEX(array,row-num,col-num)}

\texttt{array} is a range of cells, \texttt{row-num} and \texttt{col-num} can be
numbers, formulas yielding numbers or cell references to numbers. The
result of the function is a value whose relative position in \texttt{array} is
given by \texttt{row-num} and \texttt{col-num}. We use this full syntax exactly twice,
otherwise array is a single column or row and one of the arguments is
missing -- its default value is 1.

\subsection{Simulating spreadsheets by PRAMs}

In this section, we are going to formulate and prove a
theorem about evaluating spreadsheets by PRAM machines. In our model
spreadsheets can be of unbounded size, so we can use asymptotic
notation to describe the resources needed by a PRAM to execute a
spreadsheet of a given size. The theorem below is formulated for
row-organized row-directed spreadsheets, but of course its dual form
for column-organized column-directed spreadsheets holds, too.

\begin{theorem}\label{T2}
For any row-organized
row-directed restricted spreadsheet $S$ there exists a
program $\pi$ for EREW PRAM, such that if that spreadsheet filled
to make $c$ columns and $r$ rows, the values of all its cells can be
computed by $\pi$ run for $O(r\log cr)$ time on $c$ processors and
$cr$ cells of memory, given the initial $S$, $c$, $r$ and
the input data of $S$.

If $S$ is additionally row-organized, then the
values of its cells in the last row can be computed by $\pi$ run
for $O(r\log c)$ steps on a PRAM with $c$ processors and $c$ cells
of memory.
\end{theorem}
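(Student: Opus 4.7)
My plan is to build the PRAM program $\pi$ as a row-by-row simulator of the filled spreadsheet. Assign one of the $c$ processors to each column and process rows in order $k=1,2,\ldots,r$; after step $k$ the processors have collectively produced and stored the values of row $k$, and the row-directed property guarantees that all inputs needed at row $k+1$ are already available. The central observation enabling efficient parallel evaluation is that the $c$ formulas in row $k$, being filled copies of a single template, share one expression-tree structure and differ only in the column shift of their references; row-$k$ evaluation is therefore a batch computation naturally shared by the $c$ processors.

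Row $k$ is evaluated by traversing the constant-size template expression tree in topological order. At each template node, all $c$ processors jointly produce the $c$ corresponding values. Node-local operations (arithmetic, comparison, logical, \texttt{IF}, \texttt{IFERROR}, \texttt{CHOOSE}, \texttt{INDEX} with a fixed position, \texttt{ROW}, \texttt{COLUMN}) cost $O(1)$ per processor. A reference that shifts with the column is a distinct read per processor and costs $O(1)$; an absolute reference shared by all $c$ processors is distributed by an EREW tree-broadcast in $O(\log c)$. The delicate nodes are the range aggregates \texttt{AND}, \texttt{OR}, \texttt{MATCH}, and \texttt{INDEX} with a computed index, which must return the $c$ shifted values simultaneously: for the associative reductions a classical EREW parallel-prefix on a range of size $m$ gives $O(\log m)$ time, and for \texttt{MATCH} the same bound is obtained by instantiating a parallel scan with the combining operator ``lowest index at which the predicate holds'', with an analogous treatment for the sorted-array match-types $\pm 1$.

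In the second statement every range lies in a single row, so $m\le c$, every row costs $O(\log c)$, and only the current row must be retained ($c$ cells); over $r$ rows this yields the $O(r\log c)$ bound on $c$ processors and $c$ cells of memory. For the first statement I additionally mirror every computed row into its permanent location in a $cr$-cell memory and maintain an auxiliary parallel range-query structure (a segment tree, or a sparse table for idempotent \texttt{AND}/\texttt{OR}) over the already-stored cells, supporting queries and row-wide updates in $O(\log cr)$ time; this covers absolute ranges that may grow under filling and, where relevant, larger shifted windows, so each row again costs $O(\log cr)$ and the total is $O(r\log cr)$. The principal technical obstacle is enforcing EREW's exclusive-read discipline throughout: whenever a single value produced at one template node must reach many processors, an explicit $O(\log c)$ tree-broadcast has to be inserted, and the range-query procedures (especially for \texttt{MATCH}) must be designed so that the $c$ simultaneous queries issued in one row touch disjoint memory cells at each level, or are serialized through a batched tree traversal, while still fitting into the claimed per-row bound.
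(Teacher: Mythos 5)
Your overall skeleton coincides with the paper's: one processor per column, rows evaluated in successive rounds, with the row-directed property guaranteeing that all operands of row $k$ are available when round $k$ begins, and with the observation that only a bounded window of recent rows (not just the current one, but the last $k$ rows where $k$ is the height of the template, plus the initial rows) needs to be kept in the row-organized case to reach the $O(c)$ memory bound. The point where your argument has a genuine gap is the treatment of \texttt{MATCH}, which is precisely the function the paper's proof is built around. A row may contain up to $c$ cells each issuing its own \texttt{MATCH} call, with $c$ \emph{different} lookup values and ranges of size up to $c$ (for row ranges) or up to $r$ (for column ranges). Your proposal to answer these by ``a parallel scan with the combining operator lowest index at which the predicate holds'' computes one such query in $O(\log m)$ time using $m$ processors; it does not answer $c$ distinct queries with $c$ distinct keys using only $c$ processors in total, and serializing them, as you suggest as a fallback, costs $O(c)$ per round rather than $O(\log cr)$. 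Likewise, the segment tree or sparse table you maintain over stored cells supports positional range aggregates (sums, minima), but \texttt{MATCH} is a value-keyed query --- ``first position whose content equals $v$'' --- which such structures do not answer in logarithmic time.

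The paper closes exactly this hole with different auxiliary structures: every row is kept both in original form and as an array of (value, address) records sorted by value, rebuilt each round by an $O(\log c)$ EREW parallel sort with $c$ processors; every column is kept both in original form and as a red-black tree keyed by (value, address), into which each column's processor inserts its new value in $O(\log r)$ per round. Each \texttt{MATCH} then becomes an independent binary search (respectively tree search) executed by the single processor owning that cell in $O(\log c)$ or $O(\log r)$ time, so all $c$ queries of a round complete in $O(\log cr)$ with no shared scanning. Your per-node batched evaluation of the template tree, broadcasts for absolute references, and prefix-sum sharing for sliding associative aggregates are all compatible with this and handle the scalar and reduction-type functions correctly, but without a value-indexed, incrementally maintained search structure the claimed per-round bound does not follow for \texttt{MATCH}, and hence the theorem is not yet proved by your argument.
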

\begin{proof} 
For each column of the spreadsheet we designate one
processor, which will be responsible for it. Let the number of that
processor will be equal to the number of the column.

Assume that first from the initial constant size range of
$S$, rows are extended to the right to create $c$ columns, one by one
starting from the top row. Then a number of so created rows is filled
downwards to create $r$ rows.

The idea is to organize the computation of PRAM simulating
the above procedure in rounds, where each round corresponds to
creating one new row.

In order to evaluate formulas creating the row, PRAM must
know what to evaluate. The codes for evaluating particular formulas
are hard-coded into the program $\pi$ it executes. Observe
that the initial $S$ has a constant number of rows and
formulas, and they repeat in a circular fashion in the large
spreadsheet created by filling. Hence the program is divided into a
constant number of branches, one for each type of a column, which are
executed by the processors responsible for such columns. The portion
of code for each type of column is composed of a main loop, the body
of which determines how to evaluate the constant number of formulas
repeating in the column. Evaluating of formulas sometimes requires
coordinated actions of all processors. The code to be evaluated in
such cases is also provided in the body of the main loop.

For each round we assume that certain auxiliary data
structures are available, which enable execution of the round
sufficiently fast by PRAM. During each round, first the new values are
computed, and then these structures are updated, so that they include
the cells in the newly created row, as well.

Separately, we must explain how the auxiliary data
structures are initialized before the first round, when they should
contain information about input data.

\subsubsection{Auxiliary data structures}

We assume that during each round all the previously created
columns and rows are stored in two copies. Each row is stored in two
copies:

\begin{enumerate}
\item  in the original form,

\item  as a sorted array, where we sort and store two-element
records consisting of the value from the original row and its original
address.
\end{enumerate}
Each column is stored in two copies:

\begin{enumerate}
\item in the original form,
\item  as a balanced binary search tree (say: red-black tree), in which we
store two-element records consisting of the value from the original
row and its original address. 
\end{enumerate}

\subsubsection{Initialization of auxiliary data structures.}

Initially there is a constant number of rows and/or columns
without formulas, containing input data. It is easy to see that we can
relocate columns of input data into rows.

In order to create their sorted forms, we perform a logarithmic time
parallel sorting algorithm using $c$ processors and linear amount of
additional memory, like the one described in \cite{Rytter}[Section
  5.2]. This initialization takes $O(\log c)$ time.

Initial columns are of constant height, so the red-black
tree of each of them is created by one processor, responsible for that
column, in constant time.

\subsubsection{Execution of a round. Computing formulas}

Henceforth, PRAM must first evaluate formulas. The formulas to be
computed refer only to data above themselves, so all of them can be
evaluated independently in parallel. For each of the cells, it is done
by a single processor, responsible for the column of that cell. The
values of all functions except \texttt{MATCH} and \texttt{INDEX} can
be obviously evaluated in constant number of steps. Note that
\texttt{COLUMN} function can be evaluated, because each processor
knows its serial number, equal to the column number. \texttt{ROW} on
the other hand is evaluated by keeping a constant record of the
advancing computation time.

If \texttt{MATCH} looks up a row, a sorted version of that row is in
the auxiliary data structures, in which the processor can find the
suitable value (and its accompanying address, which ought to be
returned) using binary search in time $O(\log c).$ 

If MATCH looks up a column, a red-black tree version of that
column is in the auxiliary data structures, in which the processor can
find the suitable value (and its accompanying address, which should be
returned) in time $O(\log r).$ 

In total, computing the new values takes $O(\log c+\log
r)=O(\log cr)$ time.

\subsubsection{Execution of a round. Updating auxiliary data structures.}

The sorted copy of the newly created row is computed in
$O(\log c)$ time using logarithmic time linear memory sorting
employing all $c$ processors. Then all processors in parallel insert
the new values from their columns into the corresponding red-black
trees, in time $O(\log r)$.

In total, updating the necessary data structures takes
$O(\log cr)$ time.

\subsubsection{Cost of the algorithm}

First, initialization of auxiliary data structures takes
$O(\log cr)$ time. Then the simulation of the PRAM computation
performs $r$ rounds, each of them takes $O(\log cr)$ time, so the
total time is $O(r\log cr).$ 

The memory used is constant times $cr$.

For the second claim, in a row-organized spreadsheet there
is no need to access columns, so we do not need to maintain red-black
trees. Thus in this modified version each round can be completed in
$O(\log c)$ time and the total running time is $O(r\log c).$ 

Moreover, if initially $S$ had $k$ rows, then every formula
copied from $S$ either accesses one of the initial $ k$ rows, or row at
most $k$ levels above itself. Consequently we need to store at most
$2k$ rows simultaneously, and therefore the total amount of necessary
memory is $O(c).$ 
\end{proof}

\section{Directed spreadsheets II: Algorithm for a rich function set}\label{S7}

The most significant differences between real-life
spreadsheets and the simplified ones, which we have discussed so far,
are:

\begin{enumerate}
\item More data types than just integers, most notably floating point
numbers and text strings.

\item More functions are permitted.
\end{enumerate}

\subsection{More functions}

Generally, functions provided in spreadsheet systems can be
divided into two types basic types: scalar ones, which have only
several single cell arguments (including functions without any
arguments), and aggregating ones, which have at least one argument,
which is a range. There are functions which can be used both as scalar
and aggregating ones. Therefore we apply this distinction to formulas
in spreadsheets, where each function occurrence in a formula is,
depending on the particular syntax used in it, either scalar or
aggregating.

\subsubsection{Scalar functions}

Scalar functions can be easily incorporated into our
discussion. If a scalar function is added simultaneously to the syntax
of PRAM programs and to the spreadsheets, the two systems remain in
the same relation as indicated in our theorems. The proofs can easily
be adapted.

\subsubsection{Aggregating functions}

In order to avoid considering hundreds of functions found in real life
spreadsheets, the approach we take here is to consider \textit{array
  formulas\/}, which can equivalently express great majority of
aggregating functions present in spreadsheets, and many functions
which cannot be expressed by other means. After introducing them, we
formulate a general condition on array formulas, which guarantees that
spreadsheets which use them can be evaluated by a PRAM within the same
time bounds as \texttt{MATCH} and \texttt{INDEX} functions, the only
aggregating functions we permitted so far. In particular, this means
that Theorem 2 remains true for spreadsheets which contain array
formulas which meet this condition.

The array formulas we use in this paper are constructed as
follows:

First the \textit{inner expression\/} is formed from scalar
functions, but whose arguments are rows or columns (of identical
sizes) instead of single cells. Single cells or constants appearing in
the formula are interpreted as rows or columns of identical
values. Then the result of this expression is again a row or column,
resulting by computing the functions at each position
independently. In particular, an array formula cannot contain rows and
columns simultaneously.

Finally, the inner expression is an argument of an aggregating
function, which takes an arbitrary number of elements and returns a
single number, such as, e.g., \texttt{SUM}, \texttt{MAX} or
\texttt{AVG}. This function is applied to the row or column of results
of the expression and yields a single number, which is the value of
the array formula.

Array formulas are entered by pressing Ctrl+Shift+RETURN and
are marked by a pair of braces \texttt{\{\}} around, which are introduced by
the system (and not typed in by the user). This identifies them for
the spreadsheet system and invokes their special evaluation algorithm.

\begin{example}\label{E1} Let us consider the following the
formula and its context shown in Figure 4.

\begin{figure}
\includegraphics[width=3in]{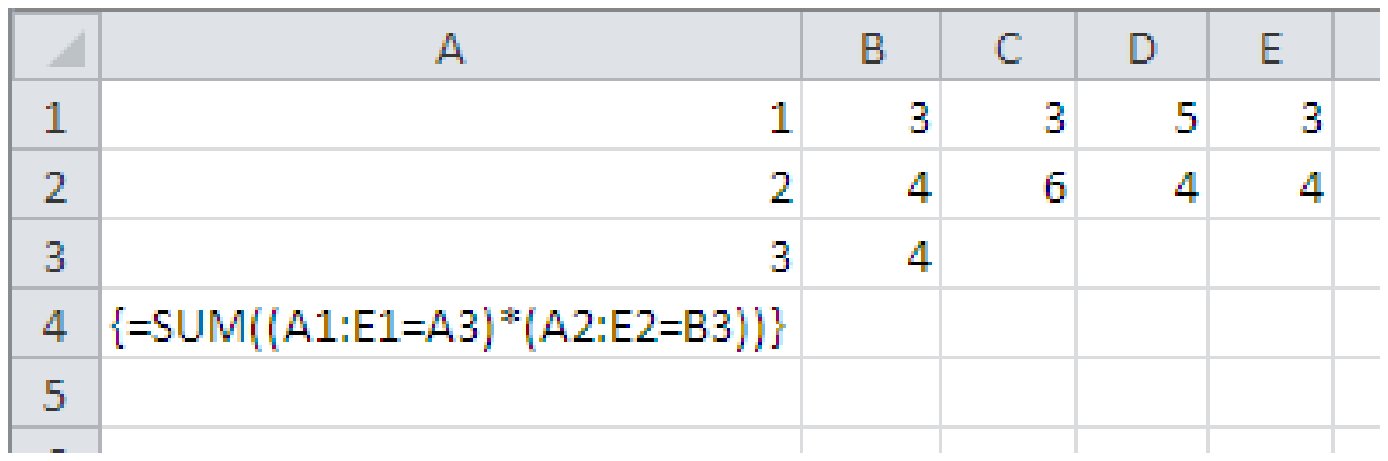}
\caption{Array formula and its context}
\label{F4}
\end{figure}

The outer aggregating function is \texttt{SUM}, and its inner
expression is \texttt{(A1:E1=A3)*(A2:E2=B3)}. In this situation
\texttt{A3} and \texttt{B3} are treated as rows of five 3s and five
4s, respectively. This way equality can be used to compare a range and
a single cell, and returns, for position in the rows,\texttt{ TRUE} if
its arguments are equal and FALSE otherwise. For multiplication,
\texttt{TRUE} is treated as 1 and \texttt{FALSE} as 0. Carrying out
these calculations at each position independently, we get a row of
values 0, 1, 0, 0, 1. Now \texttt{SUM} applied to it yields 2 and this
is the result of the array formula. Thus the final result is the count
of positions, at which rows 1 and 2 contain the pair $ (3,4)$ (i.e.,
the same as in cells \texttt{A3} and \texttt{B3}). Consequently, we
get the same function as computed by
\texttt{COUNTIFS(A1:E1,A3,A2:E2,B3)}, aggregating function present in
the newest editions of \textit{Excel\/}.
\end{example}

\begin{example}\label{E2} Another example consists of five formulas
is shown in Figure 5.

\begin{figure}
\includegraphics[width=3in]{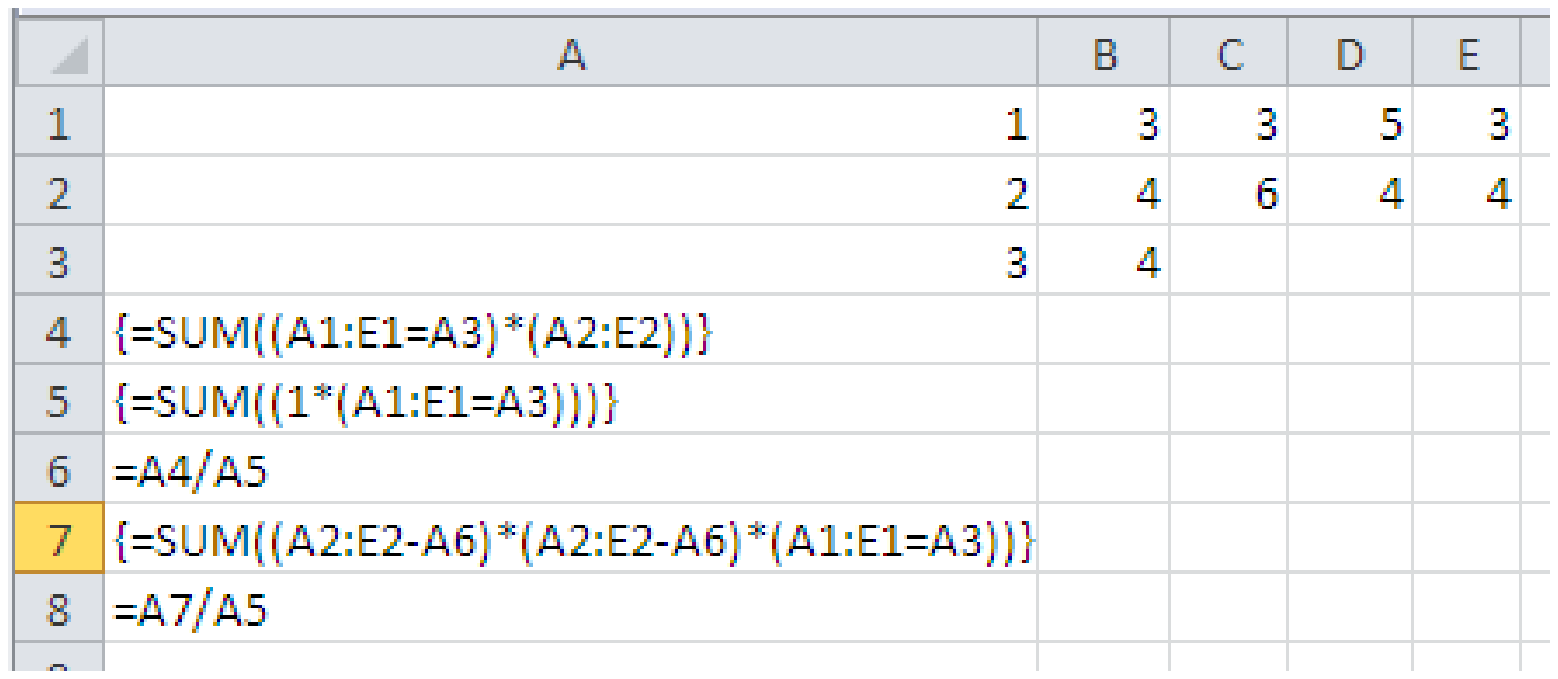}
\caption{Five array formulas and their
context}
\label{F5}
\end{figure}

The formulas compute the variance of the subset of those
numbers in row 2, which are accompanied in row 1 by a value identical
to \texttt{A3}. Indeed, the first formula (located in \texttt{A4}) computes the sum of
those values, the next formula (located in \texttt{A5}) calculates the number
of those values, the one in \texttt{A6} calculates the mean, and finally the
next two, found in \texttt{A7} and \texttt{A8}, calculate the variance according to the
formula ${\mathbb V}(X)={\mathbb E}((X-{\mathbb
E}(X))^{2}).$ 

The actual numerical result is not so important. The formula
in A6 computes the equivalent of the \texttt{AVERGAGEIFS} function found in the
newest versions of \textit{Excel}, while \texttt{A8} computes the equivalent of
the \texttt{DEVPIFS} function, that hypothetically would compute the variance
of the whole population of elements satisfying multiple criteria -
however, this function is not present in \textit{Excel}.
\end{example}

We have given an example of a spreadsheet function, which
can be equivalently expressed as an array formula, and example of an
array formula whose functionality cannot be expressed by a single
standard spreadsheet function.

\begin{theorem}\label{T3}
For any row-organized row-directed spreadsheet $S$ which, in addition
to functions permitted in restricted spreadsheets, uses array formulas
which

\begin{enumerate}
\item have \texttt{SUM} as the outer aggregating
function, 

\item whose inner expressions use only multiplication, addition, subtraction
and comparison operators\texttt{=}, \texttt{<=}, \texttt{<},
\texttt{>=}, \texttt{>}, \texttt{<>}, applied directly to compare
argument ranges and cells,

\item moreover, have at most one comparison operator
other than \texttt{=} applied to compare a range to a single cell,
\end{enumerate}

there exists a program $\pi$ for EREW PRAM, such
that if that spreadsheet filled to make $c$ columns and $r$ rows,
the values of all its cells can be computed by $\pi$ run for
$O(r\log cr)$ time on $c$ processors and $cr$ cells of memory, given
the initial $S$, $c$, $r$ and the input data of $S$.

If $S$ is additionally row-organized, then
the values of its cells in the last row can be computed by $\pi$ 
run for $O(r\log c)$ time on $c$ processors and $c$ cells of
memory. 
\end{theorem}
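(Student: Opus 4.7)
The plan is to extend the proof of Theorem~\ref{T2} by arguing that each array formula permitted by conditions (1)--(3) can be evaluated by a single processor in $O(\log c)$ time per call, provided the auxiliary data structures of Theorem~\ref{T2} are augmented with a constant number of new ones, one for each array-formula template appearing in the initial part of $S$. Since the template consists of a constant number of formulas of constant size, only constantly many new structures are introduced, and the time and memory bounds of Theorem~\ref{T2} carry over unchanged.

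The key observation is structural. By distributing products, the inner expression of an admissible array formula becomes a finite sum of terms, each a product of (i) indicators of the form $(R = c)$, (ii) at most one indicator of the form $(R\,\mathrm{op}\,c)$ with $c$ a single cell and $\mathrm{op}\in\{<,\le,>,\ge,<>\}$, and (iii) a monomial in values of ranges and single cells evaluated at the same column position. Once the formula is instantiated at row $t$, the cell references evaluate to concrete constants and all referenced ranges are rows already computed in earlier rounds. The outer \texttt{SUM} therefore asks a generalised \texttt{SUMIFS}-style query with polynomial weights: select the column positions satisfying the equality indicators, further restrict to those satisfying the single non-equality indicator, and sum a polynomial in the row values over the surviving positions.

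For each array-formula template $\phi$, I would maintain, rebuilt from scratch at each round, the following structures. First, a copy of the column positions $1,\dots,c$ sorted lexicographically by the tuple of row values appearing in the $=$-indicators of $\phi$, with the row value appearing in the non-$=$-indicator (if present) as a tie-breaking secondary key. Second, for each monomial in range values occurring in the expanded polynomial of $\phi$ (a constant-size list), a prefix sum of that monomial along the sorted order. Using Cole's $O(\log c)$-time EREW mergesort on $c$ processors, together with parallel prefix sum, all these structures for all templates can be rebuilt in $O(\log c)$ time per round with $O(c)$ additional memory. Evaluating the array formula then consists of two binary searches in the sorted structure — first locating the contiguous block matching the equality targets, then, via the secondary key, the sub-block satisfying the non-equality comparison — followed by a constant number of prefix-sum differences to aggregate each monomial over the sub-block. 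The whole cost is $O(\log c)$ on a single processor, matching the cost of a \texttt{MATCH} call in Theorem~\ref{T2}.

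Folding these estimates into the round schedule of Theorem~\ref{T2} gives the announced $O(r\log cr)$ time with $O(cr)$ memory, and the $O(r\log c)$ time with $O(c)$ memory when $S$ is row-organized (since the column-oriented red-black trees of Theorem~\ref{T2} are no longer needed and only constantly many recent rows are ever referenced). The main technical obstacle is designing the template-specific structures so that both parts of conditions (2)--(3) are accommodated simultaneously: the $=$-indicators must collectively define a predicate whose satisfying set is contiguous under a single sort order, so that a single binary search picks it out, and the one allowed non-$=$-indicator must then be representable as a prefix or suffix inside that contiguous block. This is exactly why the lexicographic sort with the non-$=$-range as tie-breaker is the right choice. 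A secondary, but easier, obstacle is the sliding-window nature of the rows referenced by $\phi$ as $t$ grows; it is handled simply by rebuilding the constant number of template-specific structures each round, which fits comfortably inside the $O(\log c)$ per-round budget.
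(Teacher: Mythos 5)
Your proposal is correct and follows essentially the same route as the paper's own proof: expand the inner expression into a sum of products, sort the column positions lexicographically by the values compared under \texttt{=} with the single non-equality range as the least significant key, maintain prefix sums of the per-position products, and answer each array formula by binary searches plus a prefix-sum difference, rebuilding the constantly many template-specific structures each round in $O(\log c)$ time. The only piece the paper spells out that you pass over quickly is the treatment of array formulas over \emph{column} ranges in the merely row-directed case, which it handles with red--black trees augmented by subtree sums (summing the left-son sums along the search path to recover the prefix sum in $O(\log r)$ time); this is the analogue of your prefix-sum array and does not change the argument.
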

\begin{proof} It suffices to demonstrate how the
evaluation of an array formula of the above described form can be
incorporated into the algorithm described in the proof of Theorem 2.

The inner expression is built from ranges, cells and
comparisons using multiplication, addition and subtraction. In
particular, the inner expression can be thus equivalently expressed as
a sum of products of ranges, cells and comparisons. Since the outer
aggregating function is \texttt{SUM}, it suffices to explain how to evaluate an
array formula whose inner expression is a single product of ranges,
cells and comparisons. Single cells as factors in the product can be
moved outside of the outer \texttt{SUM}, hence finally we have an array formula
of the form: outer aggregating function \texttt{SUM} and inner expression is a
product of expressions of the following types:

\begin{enumerate}
\item  ranges \texttt{U},

\item  comparisons between ranges \texttt{U$r$W}, where $r$ is any of
\texttt{=}, \texttt{<=}, \texttt{>=}, \texttt{<},  \texttt{>}, \texttt{<>},

\item  equality comparisons between ranges and single cells \texttt{V=C},

\item  at most one inequality comparison between a range and a
single cell \texttt{X$r$D}, where $r$ is any of \texttt{<=}, \texttt{>=}, \texttt{<}, \texttt{>}.
\end{enumerate}

\subsubsection{Auxiliary data structures}

If the ranges in the array formula are rows, the auxiliary data
structure is prepared as follows: for every subset of already existing
rows, which potentially may play together the roles of \texttt{U} and
\texttt{W} ranges above, the values found in the ranges and the
results of comparisons \texttt{U$r$W} are multiplied in each column
separately. Then the records formed from the resulting values together
with the values found in the rows which play the role of ranges
\texttt{V} are sorted according to the values in the \texttt{V} rows
and the \texttt{X} row. Which of the rows \texttt{V} are more or less
significant in the sorting is unimportant, but the single \texttt{X}
row (if present) is the least significant one in the sorting.

Finally, for the so created array prefix sums of results of
multiplications are computed and stored as an array (i.e., the first
value in the prefix sums array is the first value from the original
sorted array, the second value in the prefix sums array is the sum of
the first and second value from the original sorted array, the third
value in the prefix sums array is the sum of the first, second and
third value from the original sorted array, and so on).

If the ranges in the array formula are columns, the auxiliary data
structure is prepared as follows: for every subset of already existing
columns, which potentially may play together the roles of \texttt{U}
and \texttt{W} ranges above, the values found in the ranges and the
results of comparisons \texttt{U$r$W} are multiplied in each row
separately. Then the records formed from the resulting values together
with the values found in the columns which play the role of ranges
\texttt{V} are stored in an red-black tree according to the values in
the \texttt{V} rows and the \texttt{X} row. Again the single
\texttt{X} row (if present) is the least significant one in the
sorting. Apart from that, in the red-black tree each node stores
additionally the sum of all results of multiplications in the subtree
rooted at this node.

\subsubsection{Initialization of auxiliary data structures.}

Initially there are only constantly many rows without
formulas, consisting of input data.

In order to create their sorted forms, we perform a logarithmic time
parallel sorting algorithm using $c$ processors and linear additional
memory. Then prefix sums are computed according to the algorithm
described in \cite{Cormen}[Section 30.1.2]. In total, the
initialization takes $O(\log c)$ time.

Initial columns are of constant height, so the red-black tree of each
of them is created by one processor, responsible for that column, in
constant time. Computing sums of values from column \texttt{X} is also
immediate. This initialization takes constant time.

\subsubsection{Execution of a round. Computing formulas}

PRAM must first evaluate array formulas in the newly created row. The
formulas to be computed refer only to data above themselves, so all of
them can be evaluated independently in parallel. For each of the
cells, it is done by a single processor, responsible for the column of
that cell.

In case of row ranges, when the values of single cells \texttt{C} are
already known, binary search is performed on the ordered array to find
the first and the last element whose \texttt{V} values agree with the
\texttt{C} values. Between them there are records with identical
values in V rows, ordered by their values in row \texttt{X}. Using
further binary search, we find the point indicated by the value from
cell \texttt{D}. The values in the interval between this point and one
of the previously identified elements should be summed up. In order to
do so, we use the array with prefix sums. The required sum is the
difference between prefix sum corresponding the rightmost of them
minus prefix sum corresponding to the leftmost of them. This procedure
takes $O(\log c)$ time.

In case of column ranges, when the values of single cells \texttt{C}
are already known, the procedure is similar to the one used for
rows. The key observation is that if the processor searches for an
element in the tree, it can sum the sums recorded in the left sons of
nodes at which it descends towards the right son. When the required
element is found, the sum of those sums is precisely the prefix sum
corresponding to that element. With this remark, the previously used
computation can be easily simulated, and takes $O(\log r)$ time.

In total, computing the new values takes $O(\log c+\log
r)=O(\log cr)$ time.

\subsubsection{Execution of a round. Updating auxiliary data structures.}

The sorting of the auxiliary rows and creating the array with prefix
sums are performed in $O(\log c)$ time using the algorithms we have
used in initialization, using all $c$ processors.

Updating red-black trees by inserting the new values from their
columns and updating the sums is done in $O(\log r)$ time.

In total, updating the necessary data structures takes
$O(\log cr)$ time.

\subsubsection{Cost of the algorithm}

First, initialization of auxiliary data structures takes $O(\log cr)$
time. Then PRAM performs $r$ rounds, each of them takes $O(\log cr)$
time, so the total time is $O(r\log cr).$

The memory used is constant times $cr$.
\end{proof}

Generally, the following standard, important functions found in
spreadsheet systems can be equivalently expressed by (one or a few)
array formulas which obey the criteria of Theorem \ref{T2}:
\texttt{SUM}, \texttt{SUMIF}, \texttt{COUNT}, \texttt{COUNTIF},
\texttt{AVG}, \texttt{AVGIF}, \texttt{STDEV}, \texttt{SUMX2PY2},
\texttt{SUMX2MY2}, \texttt{SUMPRODUCT}.

Two more functions, introduced for the first time in \textit{Excel\/}
2007: \texttt{SUMIFS} and \texttt{COUNTIFS} are covered in a limited
form: if at most one of the criteria in the function is an inequality,
then it can be expressed by an array formula satisfying the
assumptions of Theorem \ref{T3}.

\subsection{Floating point numbers and strings}

Floating point numbers can be added to the PRAM model by allowing the
register and memory cells to store them. All theorems formulated for
relations between classical PRAMs and spreadsheets with integers hold
analogously when formulated for PRAMs and spreadsheets with floating
point numbers, and the proofs remain the same.

Text strings do not pose any significant problem. Indeed, there is no
aggregating function for strings in \textit{Excel}. Hence, the only
functions which use them are scalar ones, and they can be added very
much the same way as scalar functions for integers.

\section{Directed spreadsheets III: What can they compute?}\label{S8}

At this point, we have demonstrated that directed spreadsheets can be
evaluated by quite efficient parallel algorithms.

Two questions arise naturally:

Is the estimate from the previous section optimal, or is there a
chance for further improvement?

What is the set of functions, which can be implemented in directed
oriented spreadsheets?

In order to answer both questions simultaneously, we are going to
demonstrate now that there exists a spreadsheet program, with the
following property: any given CRCW PRAM $A$ can be encoded in a can be
simulated by a row-organized row-oriented spreadsheet.

\begin{theorem}\label{T4}
There exists a row-organized row-directed spreadsheet using restricted
function set, which is able to simulate any PRAM $A$ for which $ p=m$,
so that columns rows correspond to processors of $A$ and rows
correspond to computation time.

Precisely speaking, there exists a single spreadsheet $S_{1}$
consisting of 19 cells (\texttt{A2} to \texttt{A20}) with formulas, one
row for input and a separate input area for input interpreted as a
program, such that for every CRCW PRAM $A$ with program $\pi$, $p$
processors and $p$ cells of shared memory, and for every input vector
$ v$ for $A$, if one

\begin{enumerate}
\item pastes the encoding of $\pi$ into the program area of $S_{1}$

\item marks and fills the initial range \texttt{A2:A20} to the right
  creating $p$ columns (corresponding to the processors and shared
  memory cells of $A$)

\item selects the rows from 11 to 20 of these $p$ columns and fills
  downward so that the bottom row of the new range is $ 10t+10$,

\item pastes into $ S_{1}$ the input vector $v$ of $A$ in the first
  row,
\end{enumerate}

then the cells of the bottom 10 rows compute the state of $A$ after $
t$ steps of computation on $ v$.

This means, that the spreadsheet created from $S_{1}$ in steps 1, 2
and 3 performs the first $t$ steps of the computation of $A$ on every
input, i.e., it is a simulator of $A.$
\end{theorem}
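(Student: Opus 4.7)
The plan is to construct $S_1$ explicitly, assigning columns to processor/memory-cell pairs and devoting a vertical block of 10 rows to a single PRAM step. Concretely, column $c$ will simultaneously represent processor $c$ and shared memory cell $M[c]$ (legal because the theorem assumes $p=m$); the input area will contain the program $\pi$ encoded as a static table that all formulas may consult freely, since references to the input part are exempt from the row-directedness restriction. The first row will hold the input vector $v$, rows 2--10 will initialize the registers $i=j=k=0$, the instruction counter to $1$, and $M[c]=0$, and from row 11 onward the template A11:A20 will be filled downward so that its $k$-th copy (rows $10k+1$ through $10k+10$) encodes the PRAM state right after step $k$.

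Within one such 10-row block I would dedicate successive rows to the following sub-computations, each implemented with only horizontal ranges and upward references so as to remain row-organized and row-directed. First, fetch the current opcode and symbolic operands by applying \texttt{INDEX} (or a nested \texttt{CHOOSE}) to the program table using the previous instruction counter. Second, resolve the operand values: the registers $i,j,k$ and the constants $s,N,M$ are scalar lookups within the same column, while the indirect operands $M[i], M[j], M[k], M[s]$ read a horizontal range spanning the previous shared-memory row across all columns and index into it with the appropriate register value (analogously for $I[\cdot]$ against the input row). Third, compute the arithmetic or comparison result via a \texttt{CHOOSE} keyed on the opcode. Fourth, update the instruction counter using an \texttt{IF} on the comparison for the conditional goto and a \texttt{CHOOSE} branch for halting. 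Fifth, publish in a single dedicated row each processor's intended write address and intended write value, with sentinel values for non-writes. Sixth, resolve the priority-write conflict for $M[c]$. Seventh, commit the new register and memory contents into the final rows of the block.

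The principal obstacle is step six, the CRCW priority-write rule, and this is where the row-organized row-directed discipline could in principle fail: column $c$ must consult the intended write addresses of \emph{all} other columns in order to decide whether anyone writes to $M[c]$ and, if so, which lowest-numbered processor wins. The key observation is that once every processor has published its target address in a single horizontal row, \texttt{MATCH} with \texttt{match-type} $0$, applied to the column's own number against that address row, returns exactly the smallest column index whose published address equals $c$; an \texttt{INDEX} on the companion row of proposed write values at that position supplies the winning value, and \texttt{IFERROR} falls back to the previous content of $M[c]$ when no match exists. This implements priority CRCW writes at constant depth using only horizontal references, and because the address and value rows lie strictly above the write-resolution row, the read-before-write semantics are automatic.

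Correctness then reduces to a straightforward induction on the simulated step count: the initialization block sets the initial PRAM configuration, and assuming rows $10k+1$ through $10k+10$ correctly encode the state after $k$ steps, the formulas in rows $10(k+1)+1$ through $10(k+1)+10$ reproduce exactly the CRCW transition prescribed by $\pi$. Keeping the row budget down to ten per step is the remaining bookkeeping exercise: one verifies by enumeration that opcode fetch, operand resolution, arithmetic, counter update, write publication, conflict resolution, and state commit fit into ten rows when each row packs one scalar sub-expression, and together with the nine initialization rows this accounts for the 19 formula cells A2:A20 promised by the statement.
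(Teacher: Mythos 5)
Your proposal matches the paper's construction essentially exactly: the paper likewise assigns each column to a combined processor/memory-cell (exploiting $p=m$), uses a 10-row block per simulated step, and resolves the CRCW priority write by having each processor announce its intended write address in a row and then using \texttt{MATCH} to locate the leftmost (lowest-numbered) announcing processor and \texttt{INDEX} to fetch its value. The paper's own proof defers the detailed formulas to the electronic appendix S1 and only highlights these same key ideas, so your more detailed blind reconstruction is faithful to its approach.
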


\begin{proof} Electronic appendix S1 is the implementation of
PRAM in a spreadsheet with explanation of the formulas used for that
purpose.

Below we highlight the main issues of the construction of $S_{1}$.

Because $p=m$ in $A$, we may assign a processor to each shared memory
cell and make it responsible for its operation. So in every step of
computation it will perform its own action, and then take care of the
actions of the memory cell.

Conceptually, the main difficulty in the construction which is
presented below is that PRAM is a ``read and write'' machine, i.e.,
every processor can both read data from and write data to any shared
memory cell, while spreadsheet is a computation model in which cells
(which can be thought of as simple processors) can read data only from
other locations, and are allowed to write only to the shared memory
cell they are associated with, and moreover only once. This means that
in the course of simulation we will have to simulate writing by other
means. The idea is that any processor willing to write its contents to
some shared memory cell, has to announce this. Then all processors use
function \texttt{MATCH} to search for the announces of writes to their
shared memory cells, and if there is one, fetch using \texttt{INDEX}
the leftmost one, according to the priority write CRCW conflict
resolution policy.

Because processors located in spreadsheet cells are executed only
once, it is necessary to simulate a PRAM by a spreadsheet which has a
separate cell for every processor at every time instant, so the number
of necessary cells is linear in the number of processors of the PRAM
multiplied by the running time of the PRAM.
\end{proof}

At the time of this writing, \textit{Excel\/} 2013 (TM) allows the
maximal size of a single worksheet to be 1,048,576 rows by 16,384
columns, hence, according to our construction, S1 can (theoretically)
simulate a PRAM with about 16 thousands processors and cells of shared
memory, and running for slightly more than 100 thousands parallel
steps. In practice it would be extremely slow if used for simulating
PRAM computations of the maximal size, and might require very large
amounts of RAM to be installed on the computer to fit in memory. We
also provide another simulator $S_{6}$, permitting structural
programming with \textbf{while}-\textbf{endwhile} and
\textbf{if}-\textbf{endif} rather than \textbf{goto} jump
instructions. It is available as electronic appendix S6.

This construction provides answer for our questions:

\begin{enumerate}
\item There is not much room for improvement over Theorems 2 and
  3. Indeed, they offer $O(r\log cr)$ and $O(r\log c)$ algorithms for
  PRAM with $c$ processors and $O(cr)$ and $O(c)$ cells of memory,
  depending on whether the spreadsheet is only row-directed or
  additionally also row-organized. On the other hand, every PRAM
  computation taking time $ t$, using $c$ processors and $c$ cells of
  memory can be simulated in a spreadsheet with $c$ columns and about
  $ 10t$ rows.

\item The set of computations expressible in row-directed spreadsheets
  is indeed very rich, and it includes a natural parallel complexity
  class.
\end{enumerate}

It is worth noting, that accordinng to the results already proven, we
have the following.

\begin{corollary}\label{C1}
Spreadsheets $S_1$ and $S_6$ are universal row-oriented row-organized
spreadsheets.
\end{corollary}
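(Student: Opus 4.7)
The plan is to combine Theorem 2 (or Theorem 3 when array formulas are present) with Theorem 4 into a two-step translation. Fix any row-organized row-directed spreadsheet $T$ in the permitted function class. First, apply Theorem 2 to $T$ to obtain an EREW PRAM program $\pi_T$ which, running on $c$ processors with $O(c)$ cells of shared memory for $O(r\log c)$ time, computes exactly the values appearing in the last row of $T$ when $T$ is filled to $c$ columns and $r$ rows. Since every EREW program is \emph{a fortiori} a legal CRCW program (the priority-write conflict-resolution rule is simply never triggered by a correct EREW computation), $\pi_T$ may be used as input to the PRAM simulator supplied by Theorem 4.

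Second, invoke Theorem 4 on $\pi_T$: paste the encoding of $\pi_T$ into the program area of $S_1$ (respectively $S_6$), paste the input vector of $T$, suitably re-encoded, into the input row, fill $\mathtt{A2}$--$\mathtt{A20}$ to the right to create $p := c$ columns, and fill rows $11$--$20$ downward so that the bottom row number is $10t+10$ for some $t$ at least the running time of $\pi_T$. By Theorem 4 the bottom ten rows then hold the complete state of the simulated PRAM after $t$ steps, and in particular its shared memory contains the output of $\pi_T$, which by construction coincides with the desired last-row output of $T$. Thus the composite transformation $T \mapsto \pi_T \mapsto S_1(\pi_T)$ realizes every row-organized row-directed spreadsheet $T$ by filling a single fixed spreadsheet $S_1$ (or $S_6$), which is exactly what universality asserts.

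The main obstacle is purely one of bookkeeping, not of mathematical content: one must verify that the input vector of $T$ can be repackaged into the single input row that $S_1$ reads, and that the resource budget returned by Theorem 2, namely $c$ processors, $O(c)$ memory cells, and $O(r\log c)$ time, fits within the $p$ processors, $p$ memory cells, and $t$ steps that the filled $S_1$ exercises. Both points are straightforward: the input re-packaging costs at most a constant factor in the column count, and choosing $p$ and $t$ as linear and polylogarithmic functions of $c$ and $r$ respectively absorbs the resource count. An entirely symmetric argument, using Theorem 3 in place of Theorem 2, handles spreadsheets $T$ that additionally employ the restricted array formulas covered by Theorem 3, which completes the proof of the corollary for the full permitted function class.
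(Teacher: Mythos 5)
Your proposal is correct and follows essentially the same route as the paper: apply Theorem \ref{T2} (or Theorem \ref{T3}) to translate the given row-organized row-directed spreadsheet into an EREW PRAM program, then feed that program as input to $S_1$ or $S_6$ via Theorem \ref{T4}, incurring only the logarithmic overhead $O(c)$ columns by $O(r\log c)$ rows. One small wording slip: the required number of simulated steps $t$ is $O(r\log c)$, i.e.\ linear in $r$ (not polylogarithmic in $r$), as you in fact state correctly earlier in your argument.
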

\begin{proof}
Given a row-oriented row-organized spreadsheet $S$, one can derive an EREW
PRAM program $\pi$, computing the same function as $S$. This $\pi$
can be encoded and provided as (a part of) input to either $S_1$ or
$S_6$, which can execute it. 

If the initial $S$ has $c$ columns and $r$ rows, then $\pi$ should be
run on a PRAM with $O(c)$ processors and memory cells and $O(r\log c)$
time. Then, in order to simulate it, $S_1$ and $S_6$ need $O(c)$
columns and $O(r\log c)$ rows. 

Thus the overhead of simulating a spreadsheet by the universal one is
logarithmic, typical for other universal devices.
\end{proof}

The interest in the corollary is that $S_1$ and $S_6$ use only
functions from the restricted function set of Section \ref{S6}.
Therefore this result indicates that the restricted function set forms
a kind of core of the spreadsheet language of formulas, at least for
the row-oriented row-organized ones. 

Thus, in an attempt to create a theoretical model of spreadsheets,
this restricted function set is a candidate to serve as the set of
basic operations, from which the remaining one can be defined. It
would be very much like the relational algebra and its role in the
theoretical formalization of relational databases.  However, one
should remember that our proof concerns only row-oriented
row-organized spreadsheets, created from a small initial spreadsheet
by filling. 

\section{Directed spreadsheets: A parallel programming
  language}\label{S9} 

Because of the tight relationship between row-organized row-directed
spreadsheets and PRAM computations, we postulate that spreadsheets can
become a programming language for specifying scientific parallel
computations on very large datasets. An important factor is that the
computation time of the PRAM and its memory consumption are directly
related to the dimensions of the spreadsheet, and thus can be
reasonably well controlled by the programmer. Moreover, parallel
evaluation algorithms for such spreadsheets are designed for EREW
PRAM, closer to practical parallel computing facilities than CRCW
PRAM.

From the end-user's point of view, specifying massive parallel
computations by designing spreadsheets does not require prior
programming experience and is likely to be accessible for scientists
and industry. It can be offered according to the increasingly popular
``Software as Service'' paradigm.

Below we describe how such system might work.

\begin{enumerate}
\item The user prepares a small initial spreadsheet application.

\begin{enumerate}
\item The user designs the formulas.

\item The user tests the formulas carefully on a small portion of data
  using \textit{Excel\/} or some other existing spreadsheet software on
  a single machine, filling the spreadsheet with the prepared
  formulas.

\item The user estimates the amount of data to be processed.
\end{enumerate}

\item The user submits the spreadsheet and the complete set of data to
the parallel computation facility, indicating:

\begin{enumerate}
\item the location where the input data should be pasted into the
  spreadsheet

\item the number of rows and columns for which the initial spreadsheet
formulas should be filled.
\end{enumerate}

\item The facility estimates the computation cost and asks the user
for permission to execute it.

\item The user accepts the cost or aborts the procedure.

\item The facility executes the task.

\item The facility sends the computation results back to the user.

\item The facility asks the user if he/she is interested in extending
the computation on the same dataset.

\item The user makes the decision.

\item If the user considers continuation of the computations, the
facility sends him/her the rows necessary to continue the
computation.
\end{enumerate}

As one can observe, the above idea is a kind of non-interactive
spreadsheet, very much like sed is a non-interactive text editor. 

The same approach can be used to execute computations specified by
means of row-directed but not necessarily row-organized
spreadsheets. The fundamental drawback of this class is their much
higher memory consumption during evaluation.

One should keep in mind, however, that the above idea requires
considerable amount of further work to become available in
practice. First of all, EREW PRAM algorithms we demonstrate in this
paper must be suitably adapted to the architectures of existing
parallel or distributed computation systems. Next, the language of
spreadsheets should be enriched by suitable formula shorthands, making
programming easier -- the formulas used in our electronic appendix
spreadsheets S1 and S6 are anything but simple and natural, even
though they implement only bare-bones programming languages. An
alternative might be a tool external to the spreadsheet, translating a
richer language into the formulas composed from the basic
functions. Anyway, in order to be successful, the resulting enriched
language of formulas should be a reasonable compromise between the
original language of spreadsheets, which has very strong data analysis
component and no support for programming constructs, and the PRAM
programming language, which has no data analysis primitives.

\section{Bi-directed spreadsheets: Complexity}\label{S10}

A bi-directed, column organized spreadsheet extended to dimensions $r$
and $c$ can be, according to Theorem 2, evaluated on a PRAM using
$O(cr)$ processors and $O(cr)$ cells of memory and

\begin{enumerate}
\item in time $O(c\log r)$ if treated as column-organized
column-directed.

\item in time $O(r\log cr)$ if treated as row-directed.
\end{enumerate}

One might be tempted to believe that it is possible to combine somehow
those two methods together to yield a parallel evaluation algorithm of
even better time complexity. Optimally, it might seem plausible that
time complexity $O((\log c\log r)^{{O(1)}})$ could be achieved.

However, we prove below that there is no evaluation algorithm of this
complexity unless $ P=NC,$ an unlikely event.

\begin{theorem}\label{T5} There exists a bi-directed,
column-organized spreadsheet $S_{3}$, such that when it is extended to
dimensions of $3n$ rows and $8n$ columns, it computes the solution to
the CVP problem of size $ n$, given its description as input.
\end{theorem}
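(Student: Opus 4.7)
The plan is to construct $S_3$ explicitly. The input area is a constant-height block at the top of the spreadsheet that encodes the CVP instance in the format of Figure \ref{F3}: for each gate $p_k$ the connective code together with the two input arguments (with $0,1$ denoting Boolean constants and integers $\ge 2$ denoting references to prior gates) occupy a column of this block. Because the input area is exempt from the bi-directed restriction, any formula in the computing part may reference it freely. The computing part is partitioned into $n$ vertical blocks of width $8$, with block $k$ occupying columns $8(k-2)+1,\ldots,8(k-1)$ and dedicated to evaluating $p_k$ and extending the running table of already-computed gate values.

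Within block $k$ I maintain two ``history'' columns: one stores the indices $i$ of gates already evaluated and the paired one stores the corresponding values $p_i$, with the pair $(i, p_i)$ placed at a row that depends linearly on $i$ and $k$ (a one-row downward shift per block, so that the entries from the previous block copy cleanly into strictly-above-and-left positions). To evaluate $p_k$, formulas in block $k$ first read the connective code and the two input indices from the input area; each input is then resolved by an \texttt{IF} distinguishing Boolean constants from indices $\ge 2$, in the latter case using \texttt{MATCH} on the previous block's history-index column (a vertical range in a column strictly to the left) to find the row, and \texttt{INDEX} on the paired value column to retrieve $p_i$. A \texttt{CHOOSE} or nested \texttt{IF} then combines the resolved inputs according to the connective, producing $p_k$. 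Finally, block $k$'s own history is built by (a) shifting block $k-1$'s history one row downward into block $k$'s history columns, and (b) inserting the freshly computed pair $(k, p_k)$ at the next free row, using \texttt{COLUMN()} to recover $k$ from the formula's own column index.

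The central obstacle is the strict inequality in the bi-directed definition: no formula may reference another cell in the same row or the same column, only strictly above-and-left. This forbids copying history cells ``straight across'' from block to block, and forces each propagated entry to drift downward by at least one row per block. A careful choice of placements shows that this minimum shift of one row per block, together with a constant number of extra rows per block for the \texttt{MATCH}/\texttt{INDEX} lookups and the connective evaluation, and the constant-height input area at the top, fits the whole construction inside $3n$ rows and $8n$ columns. Correctness is then established by induction on $k$: assuming block $k-1$'s history correctly records the pairs $(i, p_i)$ for all $i \le k-1$, block $k$'s formulas read only from this history and from the input area, so they compute the right value of $p_k$ and install the correct pair $(k, p_k)$ into block $k$'s history. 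A final inspection of the construction verifies that every single-cell reference is strictly above-and-left and every range is a vertical column strictly to the left, confirming both bi-directedness and column-organization, as required.
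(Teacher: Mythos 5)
Your proposal is correct and follows essentially the same route as the paper: the paper's proof consists only of the remark that CVP is implemented ``diagonally'' (deferring all details to electronic appendix S3, which contains both a simpler version using two-dimensional \texttt{INDEX} and a fully column-organized one), and your one-row-per-block downward drift of a propagated history, with \texttt{MATCH}/\texttt{INDEX} restricted to vertical ranges in columns strictly to the left, is precisely that diagonal, column-organized implementation. The only point you leave implicit is that the $n$ blocks must arise from filling a single constant-size template, but your uniform block structure with linearly shifted placements is exactly what relative references produce under filling, so this is consistent.
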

\begin{proof} The main idea is to implement CVP
``diagonally'', two fully commented implementations are provided in an
  electronic appendix S3. One of them is conceptually simpler, but
  uses two-dimensional \texttt{INDEX} calls. The other is even
  column-organized, but technical complication is the price to pay for
  this structural property.
\end{proof}

Bi-directed spreadsheets are clearly more restrictive than those which
are directed in one dimension only. Author's personal experience from
the development of $S_{3}$ is that the bi-directed structure is
quite unnatural, especially in the column-organized version. Otherwise
very simple computation of CVP required a significant effort to be
programmed. At the same time this structure does not seem to offer any
noticeable advantage in terms of complexity of evaluation.

\section{Un-directed spreadsheets II: What can they compute?}\label{S11}

After a successful implementation of a PRAM in a row-directed
spreadsheet and demonstrating that a large class of PRAM computations
can be expressed in spreadsheets, it seems natural to attempt a
similar goal for un-directed ones, too.

We demonstrate below, that one can create an un-directed row-organized
spreadsheet which implements PRAM in a much more flexible way than the
row-directed row-organized one.

\begin{theorem}\label{T6} There exists a restricted
row-organized (but not row-directed) spreadsheet $S_{2}$ consisting of
21 cells (\texttt{A2} to \texttt{A22}) with formulas using restricted
set of functions, one row for input and a separate area for input
interpreted as a program and for a value of $p$, such that for every
CRCW PRAM $A$ with program $\pi$ for every input vector $ v$, if one

\begin{enumerate}
\item  pastes the encoding of $\pi$ into the program
area of $S_{2}$ 

\item marks and fills the initial range \texttt{A2:A22} to the right
  for $q$ columns,

\item selects rows \texttt{13:22} of these $q$ columns and fills
  downward so that the bottom row of the new range is $10t+12$,

\item  pastes into $S_{2}$  the input vector
$v$ of $A$ in the first row; 

\item  inserts a number into the input cell $p$ 
\end{enumerate}
then the cells at the intersection of the bottom 10 rows with $p$
columns of numbers from $q-p-q(modp)$ to $q-q(modp)$ compute the state
of $A$ after $t*(q/p-1)$ steps of computation on $v$.
\end{theorem}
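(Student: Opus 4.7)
The plan is to adapt the construction of $S_{1}$ from Theorem \ref{T4} by arranging multiple simulated PRAM steps horizontally within one 10-row block, rather than devoting a whole 10-row block to a single step. Concretely, I would divide the $q$ columns into groups of $p$ consecutive columns each; there are $\lfloor q/p\rfloor$ complete groups plus a trailing partial group of $q\bmod p$ columns that is discarded. Within any 10-row block, the $k$-th complete group will hold the snapshot of the simulated CRCW PRAM after $k$ simulated steps inside that block.

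First I would mirror the 10-cell processor-snapshot layout from $S_{1}$, preserving the semantics of each of the ten rows (local registers $i,j,k$, program counter, write request, shared-memory value, etc.). The difference is that in $S_{2}$ every formula in the $k$-th group looks back $p$ columns, into the $(k-1)$-th group within the same 10-row block, to compute the next simulated PRAM state. All such ranges span $p$ cells in a single horizontal row, so the spreadsheet remains row-organized; but because these references now lie in rows equal to, and not strictly above, the formula's row, $S_{2}$ is no longer row-directed. Conflict resolution for priority-write CRCW is handled exactly as in $S_{1}$, with \texttt{MATCH} and \texttt{INDEX} applied to a horizontal $p$-cell range from the preceding group, so the lowest-indexed processor's write still wins.

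Next I would handle the boundary between successive 10-row blocks: the leftmost (zeroth) group of a new block copies the state from the rightmost \emph{complete} group of the block above, using purely vertical references to the row 10 positions higher in the appropriate $p$ columns. This costs one group per block that does not advance the simulation, which accounts for the factor $q/p-1$ in the claim. Stacking $t$ such blocks then advances the simulation by $t(q/p-1)$ steps, and the final state resides in the rightmost complete group of the bottom block, which is precisely the column range $q-p-(q\bmod p)$ through $q-(q\bmod p)$ described in the statement. The two extra formulas beyond the 19 of $S_{1}$ (to reach 21) are used to (a) read $p$ from the dedicated input cell and (b) detect whether a given row is a block boundary, selecting between ``copy from previous block'' and ``advance one step'' behaviour.

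The main obstacle I expect is bookkeeping rather than conceptual. PRAM processors are identified by serial numbers $1,\dots,p$, but a cell's absolute column index ranges up to $q$. Every formula that in $S_{1}$ used \texttt{COLUMN()} to recover a processor identity must be rewritten in terms of \texttt{MOD(COLUMN(),p)} (or an equivalent arithmetic expression using the known $p$), and every \texttt{MATCH} that compares shared-memory addresses against processor identities must use the intra-group index rather than the absolute column. Keeping this indexing consistent across all ten rows of the snapshot, including the first group of a block (whose neighbour lies in the block above) and the last complete group (whose right neighbour is either the partial trailing group or the right edge of the filled area), is the delicate part; it is also the reason the detailed implementation is delegated to electronic appendix S2 rather than displayed in the text.
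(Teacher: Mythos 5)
Your proposal follows essentially the same route as the paper: the paper's in-text proof simply points to electronic appendix S2 and states that S2 is based on S1, and your sketch (groups of $p$ columns within each 10-row block advancing simulated time horizontally, with one group per block spent copying the state down from the previous block, yielding the $t(q/p-1)$ step count and the final-group column range) is consistent with everything the paper says about $S_2$, including that it is row-organized but not row-directed and that the simulated PRAM can only refer to the previous simulation step. Since the actual formulas live in an appendix not reproduced here, the fine details cannot be checked, but your reconstruction supplies more argument than the paper's own one-line proof and contains no visible gap.
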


Informally, the spreadsheet $S_{2}$ above is able to simulate any PRAM
$A$, for which $ p=m$ in such a way, that filled to $ 10t$ rows and
$q\ge p$ columns, it can utilize this computation area to encode a
PRAM with $p$ processors and $p$ cells of shared memory, running for
$tq/p$ time. Moreover, the parameter $p$ is a part of the input, so
only the whole input specifies, how many processors will be used in
the computation. In particular, for $p=1$, this results in a fully
sequential computation of length $ tq.$

\begin{proof} The commented spreadsheet is provided as electronic
  appendix S2. However, it is recommended that the reader first
  analyses S1, on which S2 is based.
\end{proof}

It is instructive to compare $S_{5}$ from Section \ref{S5} with the
present $S_{2}$. It seems at the first glance that the former is a
special case of the latter. However, it is not the case. During the
whole computation expressed in $S_{5}$ it is always possible to refer
to the values computed in the past, no matter how distant. In $S_{2}$
the simulated PRAM can only refer to the values computed in the
previous step of simulation. This indicates the difficulty of
describing the computations of a spreadsheet by a machine model. In a
spreadsheet, every cell is computed exactly once, but its value
remains accessible forever. In typical machine models of computation,
memory locations can be reused (overwritten), but once this is done,
their old values become inaccessible.

\section{Directed spreadsheets: Limitations}\label{S12}

First of all, macros and user-defined functions written in general
purpose programming languages, such as Visual Basic in \textit{Excel\/}
are excluded from the spreadsheets we analyze for the reason that in
their presence all our above considerations become futile. Indeed, in
their presence spreadsheet cells become mere output locations where
the results of absolutely unrestricted computations of the macros and
functions are written. Therefore the structure of the spreadsheet does
not reveal any information about how its results are computed.

Our presentation does not cover all functions of spreadsheets. We
would like to mention a few of them here, explaining the reason why
they are excluded from our analysis.

\texttt{OFFSET} is a function which differs significantly from all
other functions mentioned in this paper. It allows the user to specify
an arbitrary range of cells, whose location is determined by the
numerical arguments of \texttt{OFFSET}. \texttt{INDEX} in its full,
two-dimensional syntax is similar in purpose and function. However,
there is a fundamental difference between them. In the case of
\texttt{OFFSET} calls their arguments are calculated at runtime and
only then it becomes known, what cell each \texttt{OFFSET} call refers
to. In particular, when this function is used, it is impossible to
determine, before executing the spreadsheet, if circular references
are created, if the spreadsheet is row-organized or
row-directed. Generally, all features of spreadsheets which organize
our discussion in this paper become meaningless in the presence of
\texttt{OFFSET}. The full syntax of \texttt{INDEX} to the contrary,
has a two-dimensional range as an argument limiting the area where the
reference is created, and the full dimensions of this range is used to
determine if circular references are created. This check is performed
before running the spreadsheet.

Functions \texttt{ADDRESS} combined with \texttt{INDIRECT} can give a
similar effect, since the former can be used to create a text address
of any cell whose numerical coordinates are passed as arguments, and
the latter transforms that text into the actual reference.

\texttt{SUMIFS} and \texttt{COUNTIFS} are allowed in Section 7 in a
limited form. The complete form, which permits specifying two (or more
rows) with inequality-type criteria and an additional row whose values
should be added, refers to an unsolved problem in the theory of
algorithms. If we consider the values in the first two rows as
coordinates of points in the plane and the values in the third column
as values located in those points, then specifying inequality-type
criteria for summation or counting amounts to summing or counting the
elements in an arbitrary quadrant of the plane. There is currently no
data structure and algorithm known to allow for inserting points with
values and summing values from any quadrant in time logarithmic in the
number of specified points. However, there is no proof that this is
impossible, either.

Circular references break the relation between the computation cost
and the dimensions of the spreadsheet, because every cell is
potentially recomputed many times. Moreover, the principle of
direction becomes difficult to formulate. Electronic appendix S7 is
the implementation of a PRAM in a spreadsheet with circular
references. It is completely analogous to S1, except that it does not
require creating additional rows. Instead, one has to enable circular
references in the \textit{Excel\/} options and change the value of the
red SWITCH cell to 1 to run the computation. Inserting 0 in this cell
resets the simulator to the initial state.

Presence of circular references brings about a completely new
phenomenon in the spreadsheets -- \textit{nondeterminism}. Consider a
three-cell example in Figure 6.

\begin{figure}
\includegraphics[width=3in]{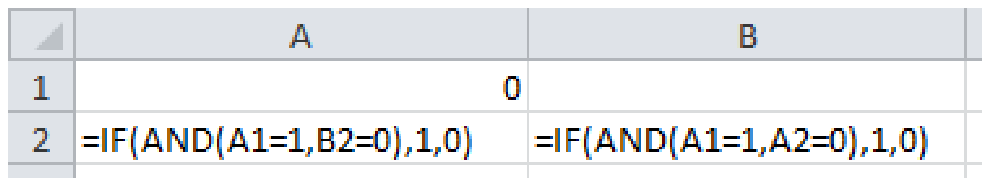}
\caption{An example of nondeterminism in a spreadsheet with circular
  references. If the value of \texttt{A1} is changed to 1, either
  \texttt{A2} or \texttt{B2} changes its value, but it is a
  nondeterministic choice, which one}
\label{F6}
\end{figure}

If in this situation we change the value of \texttt{A1} to 1, both
\texttt{A2} and \texttt{B2} should be recomputed, because both depend
on \texttt{A1}. Their contents are absolutely symmetrical. Which one
will be recomputed first is decided by the spreadsheet cell
recomputation mechanism. Recomputation changes its value from 0 to 1,
and thereby prevents the other cell from changing the value from 0 to
1. Thus the spreadsheet breaks the symmetry between the contents of
\texttt{A2} and \texttt{B2} and no logical examination of the formulas
in the spreadsheet can determine which cell will change its value to 1
and which won't.  Therefore the characterization of the expressive
power of spreadsheets with circular references most likely should
refer to nondeterministic complexity classes.

\section{Summary and related research}\label{S13}

We have turned spreadsheets into a class of algorithms, by assuming
that they are small templates, which are filled to a larger area of
the grid to process data of variable size.

Under this scenario we have identified simple structural properties of
spreadsheets, defined in the terms of the pattern of references
between cells, which determine the complexity of the expressible
computations.

One of the classes of spreadsheets appears to correspond closely to a
natural parallel complexity class. It is therefore a reasonable
candidate to become a general purpose parallel programming language
dedicated for users without programming training.

There was very little previous research on the computational power of
spreadsheets. The papers \cite{bernstein}, \cite{tm} and
\cite{animation} demonstrate simulations of various algorithms and
models o computation using spreadsheets, but without any intent to
estimate the full power of this computation paradigm. Paper
\cite{SIGMOD} demonstrates how to implement relational algebra queries
in spreadsheets (which turn out to be column-organized, but not
necessarily column-oriented).

\bibliographystyle{plain}


\begin{thebibliography}{10}

\bibitem{bernstein}
Mark Bernstein.
\newblock Using spreadsheet languages to understand sequence analysis
  algorithms.
\newblock {\em Computer Applications in the Biosciences}, 3:217--221, 1987.

\bibitem{science2}
Elissa~J. Chesler, Sandra~L. Rodriguez-Zas, Jeffrey~S. Mogil, Ariel Darvasi,
  Jonathan Usuka, Andrew Grupe, Soren Germer, Dee Aud, John~K. Belknap,
  Robert~F. Klein, Mandeep~K. Ahluwalia, Russell Higuchi, and Gary Peltz.
\newblock In silico mapping of mouse quantitative trait loci.
\newblock {\em Science}, 294(5551):2423, 2001.
\newblock In {T}echnical {C}omments.

\bibitem{Cormen}
Thomas~T. Cormen, Charles~E. Leiserson, and Ronald~L. Rivest.
\newblock {\em Introduction to algorithms}.
\newblock MIT Press, Cambridge, MA, USA, first edition, 1990.

\bibitem{Excel-help}
Mirosoft Corp.
\newblock {\em Excel on-line help system}.
\newblock \texttt{http://office.microsoft.com/en-us/excel-help/}.

\bibitem{Rytter}
Alan Gibbons and Wojciech Rytter.
\newblock {\em Efficient parallel algorithms}.
\newblock Cambridge University Press, 1988.

\bibitem{science1}
Andrew Grupe, Soren Germer, Jonathan Usuka, Dee Aud, John~K. Belknap, Robert~F.
  Klein, Mandeep~K. Ahluwalia, Russell Higuchi, and Gary Peltz.
\newblock In silico mapping of complex disease-related traits in mice.
\newblock {\em Science}, 292(5523):1915--1918, 2001.

\bibitem{NoDebt}
Thomas Herndon, Michael Ash, and Robert Pollin.
\newblock Does high public debt consistently stifle economic growth? a critique
  of {R}einhart and {R}ogoff.
\newblock Technical report, Political Economy Research Institute, University of
  Massachusetts, 2013.

\bibitem{JPMorgan}
Report of {JPMorgan Chase \& Co. Management Task Force Regarding 2012 CIO
  Losses}.
\newblock Technical report, 2013.

\bibitem{tm}
I.~M. Premachandra.
\newblock Modeling a {T}uring machine on a spreadsheet: a learning tool.
\newblock {\em International journal of Information and Management Sciences},
  4(2):81--92, 1993.

\bibitem{animation}
E.~Rautama, E.~Sutinen, and J.~Tarhio.
\newblock Excel as an algorithm animation environment.
\newblock In {\em ITiCSE '97: Proceedings of the 2nd conference on Integrating
  Technology into Computer Science Education}, pages 24--26, New York, NY, USA,
  1997. ACM.

\bibitem{Debt}
Carmen~M Reinhart and Kenneth~S Rogoff.
\newblock Growth in a time of debt.
\newblock Technical report, National Bureau of Economic Research, 2010.

\bibitem{SIGMOD}
Jerzy Tyszkiewicz.
\newblock Spreadsheet as a relational database engine.
\newblock In {\em Proceedings of the 2010 International Conference on
  Management of Data}, SIGMOD '10, pages 195--206, New York, NY, USA, 2010.
  ACM.

\end{thebibliography}

\appendix
\section*{Electronic appendices}

Electronic appendices can be downloaded from author's Web page
\\
\url{http://www.mimuw.edu.pl/~jty/Spreadsheets}.\\ 
Generally,
spreadsheets referred to as S$i$ has name
\texttt{Tyszkiewicz\_S$i$.xlsx} there.
\end{document}